\definecolor{pgreen}{RGB}{84, 129, 102}
\definecolor{porange}{RGB}{199, 103, 42}
\newenvironment{mybox}[1]{%
\begin{center}%
\begin{tcolorbox}[enhanced,title={#1},width=.99\linewidth,breakable]%
}{\end{tcolorbox}\end{center}}
\newcommand*\eye{{\normalfont\openone}}
\newcommand*\prob[1]{\mathbb{P}\left[{#1}\right]}
\newcommand*\expv[1]{\mathbb{E}\left[{#1}\right]}
\newcommand*\dyad[1]{{\ket{#1}\!\!\bra{#1}}}
\newcommand*\dist\thicksim
\newcommand*\dd{\mathrm d}
\renewcommand*\vec\boldsymbol
\renewcommand*\tilde[1]{\widetilde{#1}}
\renewcommand*\paragraph[1]{\section{#1}}
\let\Re\undefined
\let\Im\undefined
\DeclareMathOperator\Re{Re}
\DeclareMathOperator\Im{Im}
\DeclareMathOperator\diag{diag}
\DeclareMathOperator\tr{tr}
\DeclarePairedDelimiter\abs\lvert\rvert
\DeclarePairedDelimiter\norm\lVert\rVert
\newtheorem{theorem}{Theorem}
\newtheorem{corollary}[theorem]{Corollary}
\newtheorem{lemma}[theorem]{Lemma}
\theoremstyle{remark}
\newtheorem{remark}{Remark}[section]
\begin{document}
\let\oldaddcontentsline\addcontentsline
\renewcommand{\addcontentsline}[3]{}

\title{User-friendly confidence regions for quantum state tomography}

\author{Carlos de Gois}
\email{carlos.bgois@uni-siegen.de}
\affiliation{Naturwissenschaftlich-Technische Fakult\"{a}t, Universit\"{a}t Siegen, Walter-Flex-Stra\ss e 3, 57068 Siegen, Germany}

\author{Matthias Kleinmann}
\email{matthias.kleinmann@uni-siegen.de}
\affiliation{Naturwissenschaftlich-Technische Fakult\"{a}t, Universit\"{a}t Siegen, Walter-Flex-Stra\ss e 3, 57068 Siegen, Germany}

\begin{abstract}
    Quantum state tomography is the standard technique for reconstructing a quantum state from experimental data. In the regime of finite statistics, experimental data cannot give perfect information about the quantum state. A common way to express this limited knowledge is by providing confidence regions in the state space. Though other confidence regions were previously proposed, they are either too wasteful to be of practical interest, cannot easily be applied to general measurement schemes, or are too difficult to report. Here we construct confidence regions that solve these issues, as they have an asymptotically optimal sample cost and good performance for realistic parameters, are applicable to any measurement scheme, and can be described by an ellipsoid in the space of Hermitian operators.
    Our construction relies on a vector Bernstein inequality and bounds with high probability the Hilbert--Schmidt norm error of sums of multinomial samples transformed by linear maps.
\end{abstract}

\maketitle

\paragraph{Introduction}%
    Quantum information processing relies on the precise control of increasingly higher-dimensional quantum systems. Ultimately, this ability will enable the development of quantum communication systems, quantum metrology and quantum computing. However, to ensure the correct functioning of any quantum information processing device, one must be able to reliably characterise the states being prepared. Quantum state tomography is the cornerstone technique to obtain a full description of a quantum state based solely on experimental data.

    A central practical difficulty in quantum state tomography is that it fundamentally requires measuring an exponential number of preparations of the same state, with respect to the number of qudits. There are two main approaches to alleviate this requirement. One is to impose prior constraints on the state, such as it being pure, permutationally symmetric, etc.\ \cite{kohout2012robust,kohout2010optimal,shang2013optimal,gross2010quantum,moroder2012permutationally,cramer2010efficient}, and the other is to refrain from having a complete description of $\rho$, and settle for only estimating, for example, the two-body marginals of the state \cite{bonet2020ot,cotler2020ot} or the mean values of a collection of local observables \cite{aaronson2018shadow,huang2020predicting}. For all other cases, quantum state tomography remains a standard technique.

   \begin{figure}
        \centering
        \includegraphics[width=.55\columnwidth]{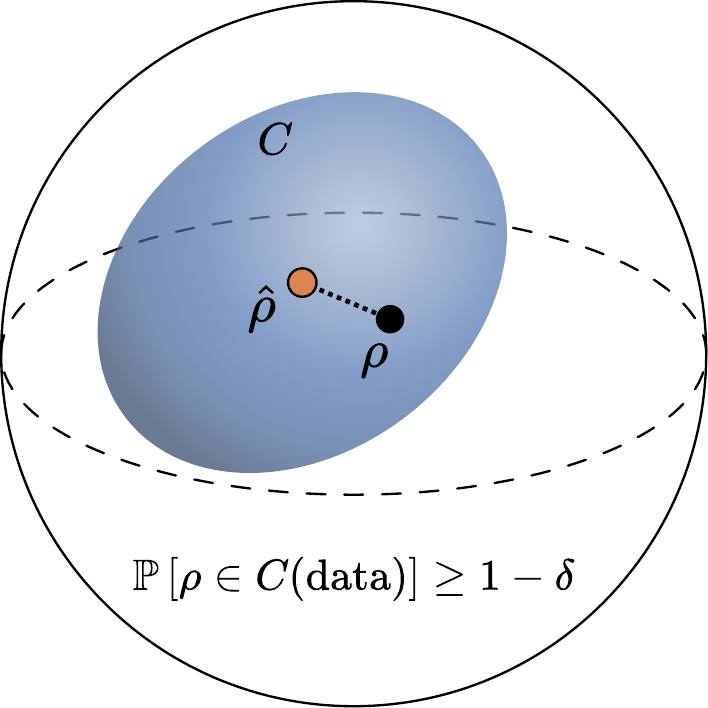}
        \caption{The goal of quantum state tomography is to reliably reconstruct a quantum state based on measurement data. In the finite data regime, any estimate $\hat{\rho}$ will almost surely diverge from the true state $\rho$ to a certain extent. This divergence can be quantified by means of confidence regions. A confidence region for quantum state tomography is a region (the blue ellipsoid) in some space (for example, the state space, or the space of Hermitian operators) that contains the true state $\rho$ with high probability. Thus, it provides quantitative guarantees on how reliable the reconstruction $\hat\rho$ is. Our main result is the construction of confidence regions described by ellipsoids in the space of Hermitian operators. The regions are efficient in the number of samples, of general applicability, and simple to report.}
        \label{fig:intro}
    \end{figure}
    
    As an inherently statistical procedure, tomography cannot exactly characterise a quantum state from finite data. This is unrelated to errors or imperfections in the experimental setup, but rather, it happens for the same reason why one cannot prove a dice is perfectly unbiased with any finite number of trials. Therefore, as much as any other experimental data, a state $\hat{\rho}$ estimated in an experiment must be presented with a guarantee of precision. From a frequentist point of view, these guarantees are typically provided in terms of confidence regions.  
    In a tomography setting, confidence regions are conveniently specified as a region $C$ that contains the true state $\rho$ with a specified confidence level $1 - \delta$, see Fig.~\ref{fig:intro}.
    Then,
    \begin{equation}
        \prob{\rho \in C(\text{data})} \geq 1 - \delta,
    \label{eq:confidence-regions-general}
    \end{equation}
    where the data can be, for instance, the outcome frequencies from a set of measurements.
    The confidence level $1-\delta$ is usually close to unity, for example a ``$3\sigma$'' confidence level corresponds to $1-\delta \approx 0.9973$ and, generally, confidence levels become higher with an increasing number of samples.
    
    To be useful in practice, a confidence region should satisfy three conditions.
    Since sampling and measuring quantum states is expensive, (1) the region should be reasonably tight. Because experimental setups may use specifically-tailored measurement schemes, it is also important that (2) the confidence region can be constructed easily, irrespective of which measurement scheme is used. Moreover, (3) it should be possible to report the region associated to the experimental data in a simple and informative way.

    Confidence regions for unconditional, state-agnostic tomography were already proposed in the literature \cite{christandl2012reliable,sugiyama2013precision,faist2016practical,wang2019confidence,guta2020fast}. Their performances in simulated experiments were recently compared quantitatively \cite{almeida2023comparison} and two outstanding regions have been identified.
    One of them is a polyhedral region \cite{wang2019confidence} which is simple to adapt for different measurement schemes, satisfying criterion (2), but that does not have an explicit tightness guaranteed. As pointed out by the authors, reporting the region is not straightforward, because it requires listing a large number of facets of the polyhedron.
    The other region is based on the spectral norm \cite{guta2020fast}. It is close to optimal in the asymptotic limit and with respect to this norm, so (1) holds, but it is generally very difficult to adapt to general measurement schemes. For the measurement schemes where it is applicable, it is easy to report the region, so criterion (3) is also fulfilled.

    The regions that we present satisfy all three criteria. They have the shape of a Hilbert--Schmidt norm sphere or ellipsoid in the space of Hermitian operators, which can be described through their radius or semiaxes. On top of their simple geometry, the regions are applicable to any measurement scheme, and can be integrated into convex optimisation procedures. This last property is shared with the previous two regions, and is crucial to determine whether the true state is outside of a given convex set, and also to provide confidence intervals for functions of the state.

\paragraph{State tomography and confidence regions}%
\label{sec:state-tomography-and-crs}
    In a tomography procedure, a large number $N$ of systems in the same state $\rho$ is prepared and independently measured. If the measurements have enough structure to provide full information on the state, then it is possible to obtain an estimate $\hat{\rho}$ that converges to $\rho$ with increasing $N$.  
    To formalise this, we consider different measurement settings, labelled by $s$. In quantum theory, a measurement is described by operators $E_{a \vert s}$, each associated with a possible outcome $a$ from the measurement setting $s$. The operators are positive semidefinite, $E_{a \vert s} \succcurlyeq 0$, and sum to identity, $\sum_a E_{a \vert s} = \eye$. Measuring setting $s$ on a quantum state $\rho$ returns the outcome $a$ with probability $p_{a \vert s} = \tr(E_{a \vert s} \rho)$. For convenience, we describe the outcome probabilities of setting $s$ by the vector $\vec{p}_s = ( p_{a \vert s})_a$ and we also combine all such vectors into a single vector, $\vec p=\bigoplus_s\vec p_s$. Now suppose we have measured $n_s$ of the total $N$ state preparations with setting $s$ and counted $c_{a \vert s}$ occurrences of outcome $a$. Then the frequency of this outcome is $f_{a \vert s} = c_{a \vert s}/n_s$ and $f_{a \vert s}$ converges to $p_{a \vert s}$ with increasing $n_s$. Doing the same for each setting, we obtain the collection of frequencies $\vec{f}_s$ and the vector of all frequencies $\vec f$. 

    In full quantum state tomography, the measurements used to obtain $\vec{f}$ have to be tomographically complete. This means that the total measurement map $M\colon \rho \mapsto \vec{p}$ can be inverted, yielding the point estimate $\hat\rho(\vec f)=M^+\vec f$ for the measured data. The map $M^+$ can be any left-inverse of $M$, that is, any map where $M^+M\rho=\rho$ holds for all $\rho$. A common choice is the pseudoinverse, which corresponds to the free least-squares estimator; see Ref.~\cite{acharya2019comparative} for a comparison between several common point estimators.
    Important examples of measurement schemes are the Pauli-bases measurement, global Pauli observables, and structured measurements. For Pauli-bases, one considers $q$ qubits, where each is independently measured with one of the Pauli observables $X$, $Y$, or $Z$ (App.~\ref{ap:pauli-tomography}) while for Pauli observables (App.~\ref{ap:pauli-obs2}, one globally measures tensor products of $\eye$, $X$, $Y$, and $Z$. Structured measurements \cite{guta2020fast} constitute a class of highly symmetric global measurements (App.~\ref{ap:sic-v3}. They include the symmetric informationally complete measurements \cite{renes2004sic}, maximal sets of mutually unbiased bases \cite{durt2010mubs}, and the uniform measurement.
    
    The estimator can be used to define norm-based confidence regions, that is, confidence regions based on statements of the form
    \begin{equation}
        \prob{\norm*{ \rho - \hat{\rho}(\vec f) }_\star \geq \epsilon} \leq \delta,
    \label{eq:confidence-regions-norm}
    \end{equation}
    where $\star$ is a chosen norm, $\epsilon$ stands for the radius of the region under the given norm, and $1 - \delta$ is the confidence level. In general, $\delta$ depends on several parameters, such as the number of samples, the state dimension, the chosen norm and $\epsilon$. An inequality of this type implies that
    $\norm{\rho-\hat\rho(\vec f)}_\star < \epsilon$ is true with probability of at least $1 - \delta$, thus providing rigorous guarantees that the estimate is close to the true state.

\paragraph{Constructing the confidence regions}%
    From our previous discussion, it follows that $\hat{\rho}(\vec f)$ can be interpreted as a sum of matrix-valued or vector-valued random variables, and that Eq.~\eqref{eq:confidence-regions-norm} measures how much it may deviate from the true state. Concentration bounds, such as matrix- or vector-Bernstein inequalities can provide rigorous guarantees \cite{ahlswede2002strong,gross2011recovering,minsker2017some,tropp2012user,tropp2015introduction,vershynin2018high,wainwright2019high}. In particular, matrix bounds have been previously applied to quantum state tomography \cite{gross2010quantum,guta2020fast}. Following similar ideas, we derive two confidence regions for quantum state tomography. However, instead of starting from a matrix bound, we use a vector Bernstein inequality as our cornerstone:
    Given independent, zero-mean, vector-valued random variables $X_1, \ldots, X_N$, it holds that
    \begin{equation}
        \prob{ \norm*{ \frac{1}{N} \sum_i X_i }_2 \geq \epsilon\sigma}
        \leq 8 \exp \left[ -\frac{N\epsilon^2}2\,w \left( \epsilon \frac L\sigma \right) \right].
    \label{eq:vector-bernstein-main}
    \end{equation}
    Here, $\norm{\vec x}_2$ is the Euclidean norm, $w(x) = 3/(3+x)$, the parameter $\sigma^2$ bounds the average of the variances of $\norm{X_i}_2$, and $L$ is a bound on all $\norm{X_i}_2$; for details, see Appendix \ref{ap:vector-bernstein}.

    \begin{table}
    \begin{tabular}{lcc} \toprule
        {} & {$\sigma_A$} & {$\sigma'_B$} \\ \midrule
        Pauli-bases & $d^{1.1610}$ & $d^{1.2925}$ \\
        Pauli observables & $d^{3/2}$ & $d^{3/2}$ \\
        structured measurements & $\sqrt{5/4}\, d$ & $\sqrt{3/2}\, d$ \\ \bottomrule
    \end{tabular}
    \caption{
    Expressions for the radius $\epsilon \sigma_A$ of confidence region $C_A$ and the largest semiaxis $\epsilon\sigma_B'$ of region $C_B$ depending on the total dimension $d$ of the system. For the Pauli-bases, each qubit is measured with all Pauli matrices $X$, $Y$, $Z$, while for Pauli observables, all combination of Pauli matrices, including $\eye$, are measured globally. Structured measurements \cite{guta2020fast} are a general class of qudit measurements which includes, for example, maximal sets of mutually unbiased bases and symmetric informationally complete measurements. Tighter bounds and all semiaxes are provided in Appendix \ref{ap:explicit-construction}.}
    \label{tab:sigmas}
    \end{table}
    
    To make Eq.~\eqref{eq:vector-bernstein-main} suitable for quantum state tomography, we consider data coming from multinomial distributions $\mathcal D(\vec{p}_s; n_s)$ with parameters $\vec p_s$ and a total of $n_s$ samples; the index $s=1,\dotsc,m$ allows us to combine data from different distributions. We write $\vec f_s$ for the frequencies, hence $\vec f_s\dist\mathcal D(\vec{p}_s; n_s) / n_s$.
    So that we can later model the state estimator, it is convenient to add linear maps $G_s$ and consider the sum of random vectors $\frac 1N\sum_s G_s (\vec{p}_s - \vec{f}_s)$. This can be substituted into Eq.~\eqref{eq:vector-bernstein-main}, but to apply it we are still bound to determine $\sigma$ and $L$.
    We argue that not much is lost by choosing $\sigma^2 = \sum_s \gamma_s^2/q_s$ and $L = 2 \max_s(\gamma_s/q_s)$, where $q_s=n_s/N$ and $\gamma_s \equiv \norm{G_s}_{2,\infty}$ stands for the maximum of the Euclidean norms of the column vectors of $G_s$ (App.~\ref{ap:bernstein-multinomial}).
    The resulting inequality holds for general multinomial distributions and linear maps $G_s$.
    For tomography experiments, it is convenient to combine all measurement settings into a single generalised measurement, that is, $m=1$, $\sigma=\norm{G}_{2,\infty}$, and $L=2\sigma$ (see Sec.~\ref{sec:usage} and App.~\ref{ap:construct-general}).

    Our first confidence region ensues when setting $G = M^+$. Since $M^+ (\vec p - \vec f) = \rho - \hat\rho(\vec f)$, we obtain the confidence region
    \begin{equation}\label{eq:region-ca}
        C_A(\epsilon; \vec f)
        = \set{ \rho \mid \norm*{\rho-\hat\rho(\vec f)}_\mathrm{HS} \leq \epsilon \sigma_A}.
    \end{equation}
    Here, $\norm{A}_\mathrm{HS}=\sqrt{\tr(A^\dag A)}$ is the Hilbert--Schmidt norm and $\epsilon$ can be determined from the chosen confidence level $1 - \delta$ through
    \begin{equation}\label{eq:epsilon-main}
        \epsilon = 3 \sqrt{u} (\sqrt u+\sqrt{u+1})
    \end{equation}
    and $u = 2 \log(8/\delta)/9 N$.
    Table \ref{tab:sigmas} shows values of $\sigma_A$ for common measurements.
    Geometrically, this region describes a sphere in the space of Hermitian operators, which is centred at $\hat\rho(\vec f)$ and has radius $\epsilon\sigma_A$.
    For the second region, we use $G = MM^+$. This yields
    \begin{equation}\label{eq:region-cb}
        C_B(\epsilon;\vec f)
        = \set{\rho \mid \norm*{\rho-\hat\rho(\vec f)}_M\le \epsilon \sigma_B },
    \end{equation}
    where $\norm{A}_M \equiv \norm{M A}_2$, and $\epsilon$ is again given by Eq.~\eqref{eq:epsilon-main}.
    Region $C_B$ is an ellipsoid in the space of Hermitian operators, and its semiaxes can be determined from $\sigma_B$ and the diagonalisation of $M^\dag M$. The largest semiaxis $\epsilon\sigma_B'$ for common measurements are provided via Table \ref{tab:sigmas}.

\paragraph{Usage}%
\label{sec:usage}

    To report either of these confidence regions in a tomography experiment, it is only necessary to specify the estimated state $\hat{\rho}$ and the radius or the semiaxes of the region. 
    For clarity, we first describe how these can be obtained in the case of an arbitrary generalised measurement. Let us assume that the measurement is described by the effects $(E_1, E_2, \ldots)$ and that it was performed $N$ times, yielding the frequencies $\vec{f} = (f_1, f_2, \ldots)$. For the chosen confidence level $1 - \delta$, the value of $\epsilon$ can be computed as in Eq.~\eqref{eq:epsilon-main}. To determine $\sigma$, it is convenient to first choose a vectorisation $\vec{v}(A) \in \mathbb{R}^{d^2}$ of the $d$-dimensional Hermitian operators with the property that $\vec{v}(A) \cdot \vec{v}(B) = \tr(AB)$. A common choice is to let $\vec{v}(A)$ concatenate the rows of $A$. Then:
    \begin{enumerate}
        \item Construct a matrix $M$ where row $i$ is $\vec{v}(E_i)$.
        \item Compute the pseudoinverse $M^{+}$ of $M$ (this can be done numerically, symbollically or analytically).
        \item Determine the state estimate via $\hat{\rho} = \vec{v}^{-1}(M^+ \vec{f})$.
        \item For region $C_A$, compute $\sigma_A$ as the maximal Euclidean norm of the columns of $M^+$.
        \item Otherwise, for region $C_B$, compute $\sigma_B$ as the maximal Euclidean norm of the columns of $MM^+$. The semiaxes of the ellipsoid describing the region are given by $\epsilon A_i$, where to compute $A_i$ we perform a spectral decomposition of $M^\intercal M$ yielding the eigenvectors $\vec{x}_i$ and eigenvalues $\xi_i$, then set $A_i = \sigma_B \vec{v}^{-1}(\vec{x}_i) / \sqrt{\xi_i}$.
    \end{enumerate}
    The case for multiple measurement settings is discussed in App.~\ref{ap:construct-general}.
    
    In situations where sampling is particularly expensive, it is possible to optimise the confidence regions by various means. For example, since ${\sigma}_{A/B}$ depends on the choice of the estimator $\hat \rho$, one can consider to optimise the left-inverse of $M$ (App.~\ref{ap:optim-leftinverse}). Complementarily, one can optimise how the samples are distributed among the different measurement settings (App.~\ref{ap:sampling-strategies}).
    
    Our confidence regions can readily be used to compute bounds on functions $f(\rho)$. In particular, since the states contained in these regions can be described as a convex cone constraint, convex functions $f$ can be minimised over $\rho \in C_{A/B}(\hat{\rho})$ using standard optimisation techniques. Explicitly, one computes $\min_\rho\set{ f(\rho) | \rho\in C_{A/B} \text{ and } \rho\ge 0}$. Functions suitable for this method include the expectation value of observables, the fidelity to a pure state, the best-case fidelity to a mixed state, the negativity of entanglement, etc. The same idea can be used to perform a hypothesis test for $\rho$ to be outside a convex  set $\mathcal S$, by verifying that the confidence region does not contain any state in $\mathcal S$, for example for a robust certification of genuine multipartite entanglement (App.~\ref{ap:gme}).

    Even though we have presented the regions in the context of full state tomography, the same ideas apply to a variety of other problems, such as tomography with an incomplete measurement basis, and reduced marginals tomography \cite{bonet2020ot,cotler2020ot,perez2020pairwise}. This holds because our only requirement is that the estimated quantity can be computed from some probabilities $p_{a|s}=\tr(E_{a|s}\rho)$ by means of a linear map.
    
\paragraph{Performance analysis}%
    \begin{figure}
        \centering
        \includegraphics[width=0.78\columnwidth]{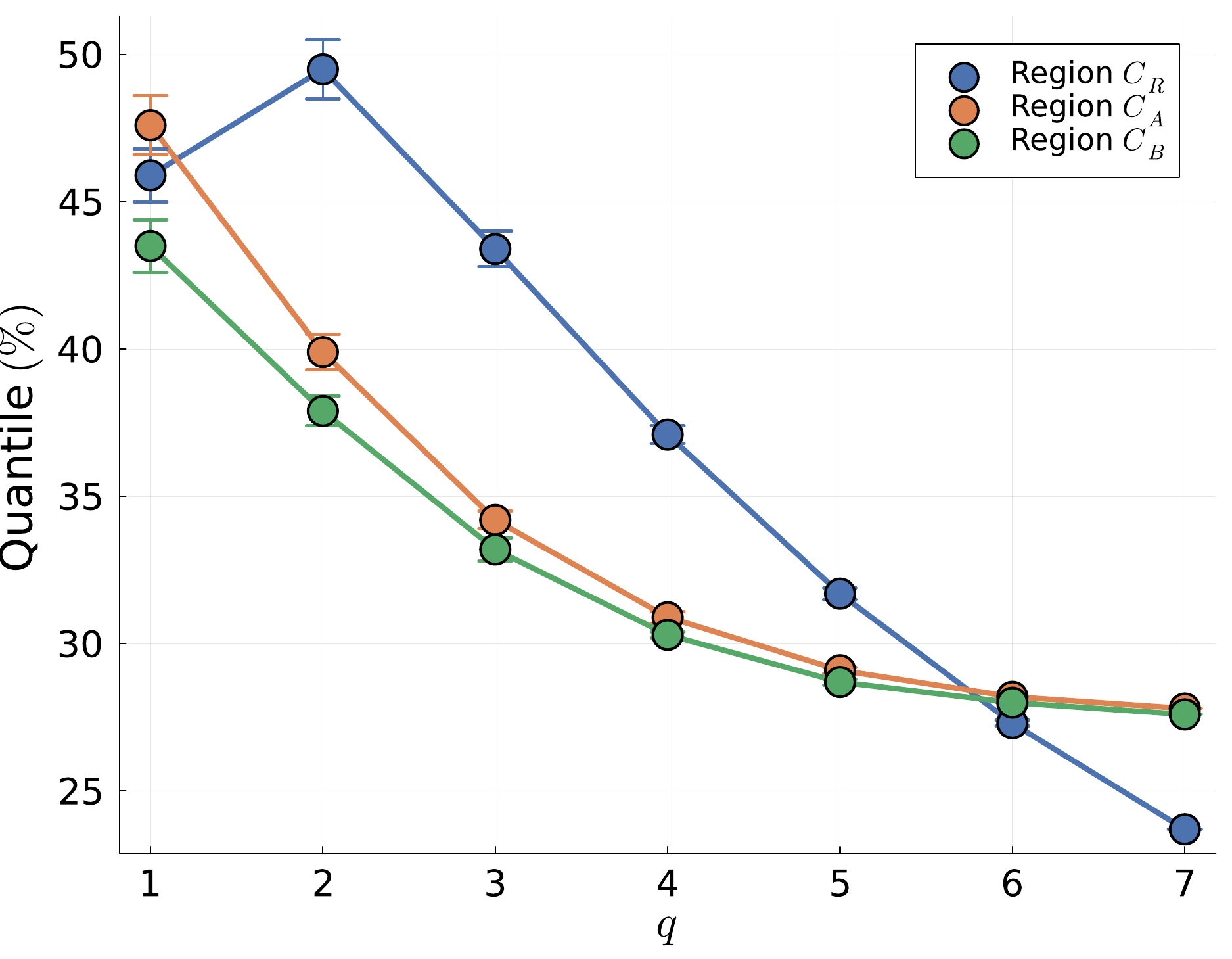}
        \caption{Tightness of confidence regions $C_R$, $C_A$, and $C_B$ in terms of the $1-\delta$ quantiles for the ratio $r=\norm{\rho-\hat\rho}_\star/(\epsilon\sigma)$. The distance $\norm{\rho - \hat{\rho}}_\star$ between the estimated state $\hat{\rho}$ and the fiducial state $\rho$ is compared to the size of the confidence region $\epsilon\sigma$.
        The $1-\delta$ quantile is a measure of the tightness of the confidence region, and would be $100\%$ for a tight region.
        Here, the distribution of the ratios $r$ is sampled over $10\,000$ reconstructions of $\rho$, the confidence level is $1 - \delta=0.99$, the number of simulated state preparations for each reconstruction is $N=60\,000$, and the Pauli-basis measurement scheme is simulated. The procedure was repeated for $50$ different initial Haar-random pure states and the largest difference in the ratios is presented as the uncertainty.}
        \label{fig:ratios}
    \end{figure}%
    One common way of evaluating the performance of a confidence region is by its asymptotic sample cost. From the previous discussion, we have that $\hat{\rho}$ is $\epsilon$-close to $\rho$ in the Hilbert--Schmidt norm and with confidence level $1 - \delta$ when the number of samples is
    \begin{equation}
        N = O \left[ \left(\frac{{\sigma}_{A}}{\epsilon} \right)^2 \log\left( \frac{1}{\delta}\right) \right].
    \end{equation} 
    For instance, using a structured measurement and a fixed confidence level, we get $N = O(d^2 / \epsilon^2)$, matching the best known sample cost for the Hilbert--Schmidt norm \cite{flammia2023quantum, kueng2017low}. Using the trace norm, $\norm{ X }_1=\tr|X|$ and $\norm{X}_1 \leq \sqrt{d} \norm{ X }_\mathrm{HS}$, it also implies that $\rho$ can be $\epsilon$-approximated in the trace norm with $N = O(d^3/\epsilon^2)$ samples. This agrees with the lower bound on the sample cost for nonadaptive, single-copy measurements \cite{haah2016sampleoptimal,lowe2022lower}, therefore region $C_A$ is asymptotically tight for structured measurements. This extends the results from Ref.~\cite{guta2020fast}, where optimality was only established for the uniform measurement. For the other measurement schemes considered therein, our results improve over their trace-norm sample costs by a factor or order $\log(d)$.
    
    \begin{figure*}
        \centering
        \subfigure[$\mathcal{D}_{1/2} (\ket{0\dotsc0}) $ and $\mathcal{D}_{1/2}(\ket{1\dotsc1})$]{\includegraphics[width=.3\linewidth]{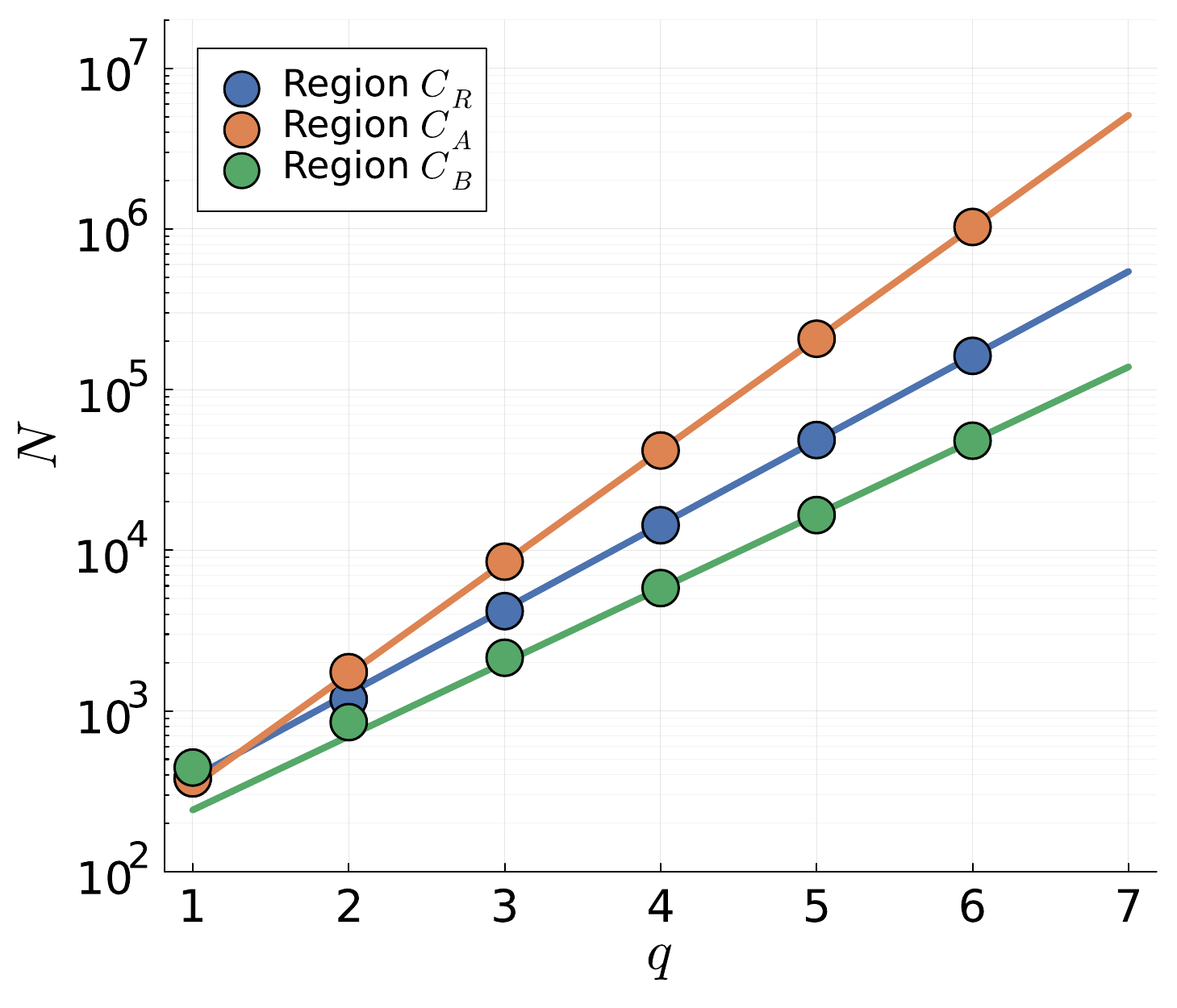}}
        \subfigure[$\ket{\text{GHZ}}$ and $\ket{\text{W}}$]{\includegraphics[width=.3\linewidth]{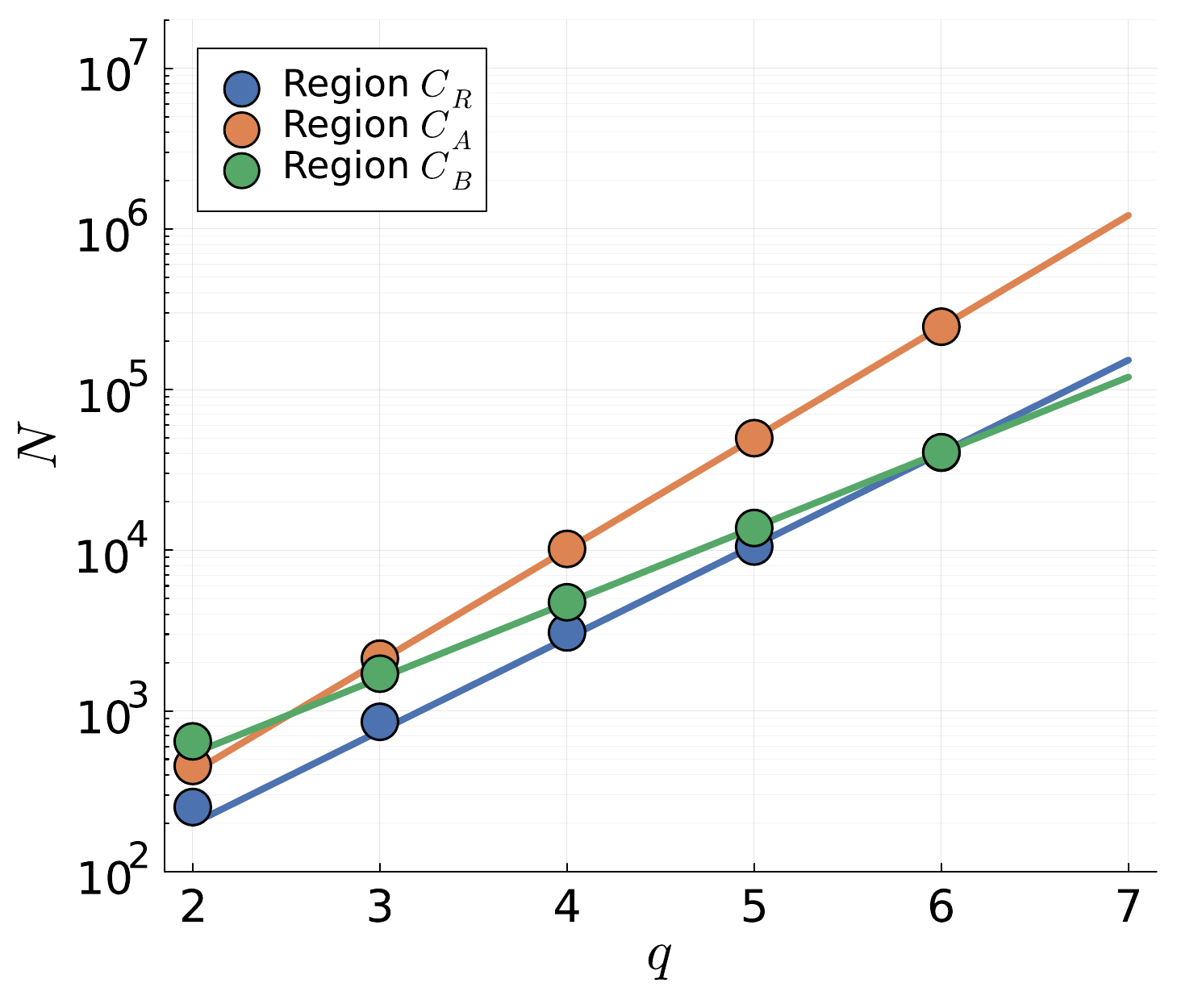}}
        \subfigure[$\ket{\text{GHZ}}$ and $\eye/2^{q}$]{\includegraphics[width=.3\linewidth]{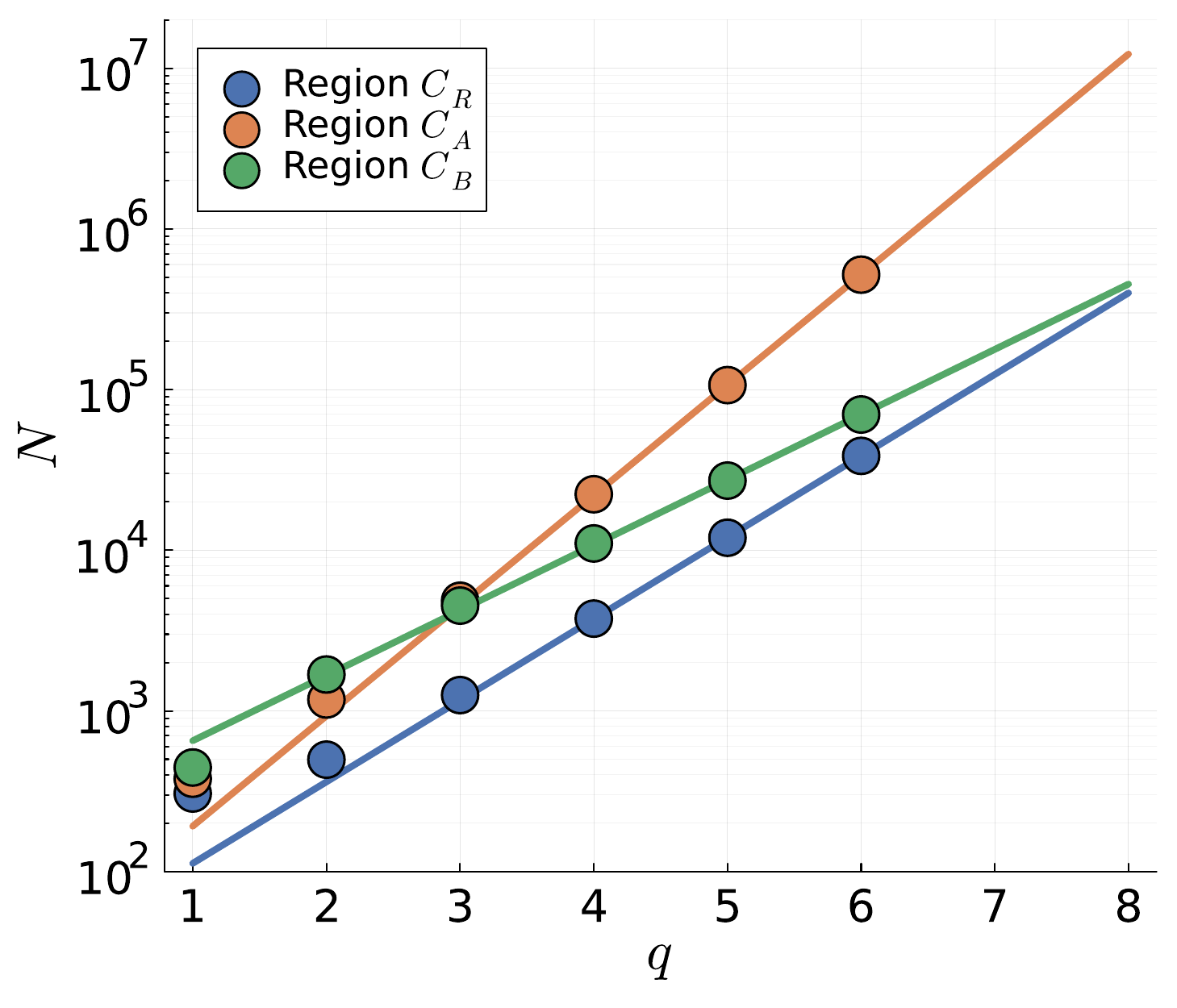}}
        \caption{Distinguishing power of the confidence regions $C_R$, $C_A$ and $C_B$.
        The plots show the number of samples $N$ for which the confidence regions of a pair of states, at a confidence level $1-\delta=0.9$, no longer overlap.
        We consider different pairs for $q=1,\dotsc,6$ qubits, namely,
        (a) $\mathcal{D}_{1/2} (\ket{0\dotsc 0})$ and $\mathcal{D}_{1/2}(\ket{1\dotsc1})$ with $\mathcal{D}_t(\ket\psi) = t\dyad{\psi} + (1-t)\eye/2^q$,
        (b) Greenberger--Horne--Zeilinger (GHZ) state and W state with $\ket{\mathrm{GHZ}}=(\ket{0\dotsc0}+\ket{1\dotsc1})/\sqrt 2$ and $\ket{\mathrm{W}}=(\ket{00\dotsc01}+\dotsm+\ket{10\dotsc00})/\sqrt{q}$,
        (c) GHZ state and maximally mixed state.
        No notable difference was observed when local random rotations were applied simultaneously to both of the fiducial states.
        We simulate tomography experiments using the Pauli-bases measurement and determine $N$ over 1024 estimation procedures such that in $(50 \pm 2)\%$ of the cases the confidence regions do not overlap.
        To each dataset we have fitted an exponential model via least-squares minimisation.}
        \label{fig:plots-intersection-frequencies}
    \end{figure*}

    The issue with the asymptotic sample complexity is that it does not predict how well the region performs in practice, since even a method with worse asymptotic complexity can be better in realistic settings.
    To address this type of concern, two region-agnostic tests were proposed which allow to asses the performance of a confidence region in simulated tomography experiments \cite{almeida2023comparison}. With respect to these tests, it was found that the region constructed in Ref.~\cite{guta2020fast} has a competitive performance, in particular for the Pauli-bases measurement.
    We use an amended form of this region (see App.~\ref{ap:improved-guta}) as a reference region $C_R$ and compare it to regions $C_A$ and $C_B$. It can be written as
    \begin{equation}\label{eq:guta-region-main}
        C_R(\epsilon; \vec f) = \set{\rho \mid \norm{\rho-\hat\rho(\vec f)}_\infty\le\epsilon\sigma_R},
    \end{equation}
    where $\norm{A}_\infty$ is the spectral norm, $\epsilon = 6\sqrt{u}(\sqrt u+\sqrt{u+1}) / \eta$, $u = \eta^2\log(2d/\delta)/ 18 N$, and $d$ is the dimension of the quantum system.  The constants $\sigma_R$ and $\eta$ depend on the chosen measurements and have been computed for a few measurement schemes \cite{guta2020fast} (see also App.~\ref{ap:improved-guta}). For the Pauli-bases measurement on $q$ qubits, they are $\sigma_R = 3^{q/2}$ and $\eta= 2\, (4/3)^{q/2}$.

    Let us first compare these regions with respect to the ratios test, which quantifies the tightness of a confidence region \cite{almeida2023comparison}. The first step is to simulate a tomography experiment by sampling from the multinomial distribution associated with some state $\rho$ and a set of tomographic measurements. From this sample, an estimate $\hat{\rho}$ is constructed. One defines $r=\norm{\rho - \hat{\rho}}_\star /(\epsilon \sigma)$ as the ratio between the distance of the estimator to the true state, and the radius of the confidence region.
    Repeating the simulated tomography will lead to a slightly different $\hat{\rho}$, which in turn leads to a different value of $r$. After many repetitions, one obtains a distribution of ratios. From this distribution of ratios, the $1 - \delta$ quantile is computed. By construction, this quantile tends to one for a tight confidence region, and is lower than one for a less tight region.
    
    We implemented this test for $C_R$, $C_A$, and $C_B$ using the Pauli-bases measurement for up to 7 qubits. The results are shown in Fig.~\ref{fig:ratios}. It can be seen that, up to $5$ qubits, region $C_R$ performs better, but it is overtaken by regions $C_A$ and $C_B$ for large systems. Also, region $C_A$ performs slightly better than $C_B$ for a small number of qubits but quickly turns equivalent.

    Though this test can tell us how tight a confidence region is, it is not particularly informative regarding the number of samples needed to confidently reconstruct a quantum state from measurement data. This can be better addressed by the distinguishing power of the confidence regions \cite{almeida2023comparison}.
    For this test, one chooses two fiducial states $\rho_1$ and $\rho_2$ and simulates tomography with $N$ samples for each of the states.
    The respective confidence regions are then constructed and one verifies whether there exists a common quantum state $\rho$ in both confidence regions.
    Clearly, the probability that such a common state exists reduces when increasing the number of samples.
    For comparison purposes, it is convenient to estimate the number of samples $N$ for which this probability is approximately $1/2$.
    
    We performed this procedure for three different pairs of states, as presented in Fig.~\ref{fig:plots-intersection-frequencies}. While no region outperforms the others in all cases, the results indicate that region $C_A$ is typically weaker than $C_R$, while $C_B$ is comparable to region $C_R$ and tends to outperform it for larger systems. Lastly, we note that the ellipsoidal shape of region $C_B$ is reminiscent of a Gaussian approximation, and it is possible to show that it outperforms a confidence region based on a Gaussian approximation \cite{almeida2023comparison} for a sufficiently large number of qubits (App.~\ref{ap:analytical-comparison}).

\paragraph{Conclusions}%
    We have constructed confidence regions for quantum state tomography which can be used for any measurement set, are easy to describe and to report, are asymptotically tight and can perform better than the best regions to date in simulated experiments for high-dimensional systems.
    For applications, any improvement in the constants appearing in the respective Bernstein bounds would directly lead to an improved performance of the regions. Given the nature of the problem, only marginal gains should be expected, but even a linear improvement on the sample cost can turn an unfeasible experiment feasible.
    From a theoretical standpoint, the fact that our regions are based on the Hilbert--Schmidt distance makes them particularly versatile and amenable to analytical treatment.

\paragraph{Acknowledgements}%
    We thank O.\ Dowson for help with JULIA code optimisation, and D.\ Gross, O. Gühne, M.\ Pl\'avala, and J.\ Steinberg for discussions. 
    The OMNI cluster of the University of Siegen was used in this project.
    We acknowledge financial support from the Deutsche Forschungsgemeinschaft (DFG, German Research Foundation, project numbers 447948357 and 440958198), the Sino-German Center for Research Promotion (Project M-0294), the ERC (Consolidator Grant 683107/TempoQ), and the German Ministry of Education and Research (Project QuKuK, BMBF Grant No.\ 16KIS1618K).
    C.G.\ acknowledges support from the House of Young Talents of the University of Siegen.

\bibliography{bibliography}
\let\addcontentsline\oldaddcontentsline

\begin{appendix}

\onecolumngrid
\newpage
\newgeometry{margin=7.5em}

\thispagestyle{empty}
\begin{center}
    \textbf{\large Appendix for ``User-friendly confidence regions for quantum state tomography''}
\end{center}

\tableofcontents

\section{Confidence regions from vector Bernstein inequalities}
\label{ap:proofs}
This Section is organised as follows. In Section~\ref{ap:vector-bernstein} we present a vector Bernstein inequality which will serve as the basis for constructing the confidence regions. Subsequently, in Section~\ref{ap:bernstein-multinomial}, we consider the case of a vector-valued random variable coming from multi-Bernoulli trials and show how to bound the necessary parameters so that the vector Bernstein inequality can be used easily. This is then applied to the case of a sum of multinomial trials, where we also discuss the optimal strategy for distributing a fixed number of samples among the trials. Lastly, in Section~\ref{ap:crs}, we show how to model quantum state tomography experiments in this framework, arriving at the two confidence regions discussed in the main text.

\subsection{Vector Bernstein inequality}
\label{ap:vector-bernstein}

We start from a matrix Bernstein inequality with intrinsic dimension, here presented as in Theorem~7.7.1 in Ref.~\cite{tropp2015introduction}.

\begin{theorem}\label{thm:matrix-bernstein}
    Consider independent zero-mean Hermitian matrix-valued random variables $X_1,\dotsc,X_N$ satisfying
    \begin{equation}
        V \succcurlyeq \sum_i \expv{X_i^2}\quad\text{and}\quad
        \quad L\eye \succcurlyeq X_i
    \end{equation}
    for some Hermitian matrix $V$ and some positive number $L$.
    Define $v= \norm{V}_\infty$, $d= \tr(V)/v$, and $w(x)=3/(3+x)$.
    Then
    \begin{equation}\label{eq:matrix-bernstein}
        \prob{ \sum_i X_i \not\prec t\eye }
        \le 4d \exp \left[ -\frac{t^2}{2v}\, w \left( \frac {t L}v \right) \right]
    \end{equation}
    holds for all $t > 0$.
\end{theorem}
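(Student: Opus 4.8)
The statement is the intrinsic-dimension matrix Bernstein inequality, so the plan is to run the matrix Laplace transform method (Ahlswede--Winter--Tropp), upgraded so that the prefactor tracks the intrinsic dimension $\mathrm{intdim}(V)=\tr(V)/\norm{V}_\infty$ rather than the ambient dimension. Writing $Y=\sum_i X_i$, the route is: (i) dominate the tail $\prob{\lambda_{\max}(Y)\ge t}$ by a trace exponential of $Y$; (ii) control that trace exponential by the matrix moment generating functions of the individual $X_i$; (iii) bound each cumulant generating function in the semidefinite order; and (iv) optimise over the free parameter $\theta$. Two nontrivial ingredients are needed: Lieb's concavity theorem, to handle the sum sitting inside the exponential, and the refinement that trades the ambient factor $\tr\eye=d$ for $\tr(V)/\norm{V}_\infty$.

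First I would establish the mgf subadditivity. Because the $X_i$ are independent, Lieb's theorem yields the master inequality $\expv{\tr\exp(\theta Y)}\le\tr\exp(\sum_i\log\expv{e^{\theta X_i}})$, so it suffices to control each $\log\expv{e^{\theta X_i}}$. Next comes the Bernstein cumulant bound: for a zero-mean Hermitian $X$ with $X\preccurlyeq L\eye$ and $0<\theta<3/L$, I would show $\log\expv{e^{\theta X}}\preccurlyeq g(\theta)\expv{X^2}$ with $g(\theta)=(\theta^2/2)/(1-\theta L/3)$. This reduces to the scalar facts that $x\mapsto(e^{x}-1-x)/x^2$ is increasing and that $e^{y}-1-y\le(y^2/2)/(1-y/3)$ for $y<3$, transferred to matrices through the monotone-function calculus together with the zero-mean condition. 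Combining the two steps gives $\log\expv{e^{\theta Y}}\preccurlyeq g(\theta)\sum_i\expv{X_i^2}\preccurlyeq g(\theta)V$, hence $\expv{e^{\theta Y}}\preccurlyeq e^{g(\theta)V}$.

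The crux is the intrinsic-dimension Markov step. The naive estimate $\prob{\lambda_{\max}(Y)\ge t}\le e^{-\theta t}\expv{\tr e^{\theta Y}}\le e^{-\theta t}\tr e^{g(\theta)V}$ produces the ambient factor $d$ through $\tr\eye$. To recover $\tr(V)/v$ I would instead apply Markov to the shifted trace function $\tr(e^{\theta Y}-\eye)$, which annihilates the directions where $V$ is small. Using the chord bound $e^{a\lambda}-1\le\lambda(e^{av}-1)/v$ for $0\le\lambda\le v$ on the eigenvalues of $g(\theta)V$ gives $\tr(e^{g(\theta)V}-\eye)\le(\tr V/v)(e^{g(\theta)v}-1)$, so the intrinsic dimension emerges. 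Converting $\tr(e^{\theta Y}-\eye)$ back into a genuine tail bound for $\lambda_{\max}(Y)$ is delicate, since eigenvalues of $Y$ below zero contribute negatively to the shifted trace; this is where a restriction to $t$ above a threshold and the overall constant $4$ enter, and I expect this conversion to be the main obstacle. Granting the resulting master bound $\prob{\lambda_{\max}(Y)\ge t}\le4\,(\tr V/v)\inf_{\theta>0}e^{-\theta t+g(\theta)v}$, I would finish by minimising the exponent: the choice $\theta=t/(v+Lt/3)$ gives $-\theta t+g(\theta)v=-\tfrac{t^2}{2v}\cdot\tfrac{3v}{3v+tL}=-\tfrac{t^2}{2v}\,w(tL/v)$, which is exactly the stated exponent, with prefactor $4d$ for $d=\tr(V)/v$.
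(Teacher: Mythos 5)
Your proposal reconstructs, in outline correctly, the proof of the intrinsic-dimension matrix Bernstein inequality: the Laplace-transform method with Lieb's concavity theorem, the cumulant bound $\log\expv{e^{\theta X}}\preccurlyeq g(\theta)\expv{X^2}$ with $g(\theta)=(\theta^2/2)/(1-\theta L/3)$, the shifted-trace Markov step with the chord bound to extract $\tr(V)/v$, and the optimal choice $\theta=t/(v+Lt/3)$ recovering the exponent $-\tfrac{t^2}{2v}w(tL/v)$. But note that this is precisely the proof of Theorem~7.7.1 in Ref.~\cite{tropp2015introduction}, which the paper does not reproduce --- it cites it. Two technical remarks on your sketch: first, from Lieb's master inequality one may conclude only the trace statement $\expv{\tr e^{\theta Y}}\le\tr\exp(g(\theta)V)$; your intermediate claim $\expv{e^{\theta Y}}\preccurlyeq e^{g(\theta)V}$ is false in general, since the matrix exponential is not operator monotone (only the trace version is needed downstream, so this is repairable). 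Second, your worry about negative eigenvalues of $Y$ spoiling Markov applied to $\tr(e^{\theta Y}-\eye)$ is legitimate, and the standard resolution is to apply Markov to the pointwise nonnegative function $\psi(a)=e^{\theta a}-\theta a-1$ and use $\expv{Y}=0$ to identify $\expv{\tr\psi(Y)}=\expv{\tr(e^{\theta Y}-\eye)}$; the condition $\theta t\ge 1$ then yields the prefactor $4$.

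The genuine gap is the range of $t$. The statement asserts Eq.~\eqref{eq:matrix-bernstein} for \emph{all} $t>0$, whereas your route --- as you yourself anticipate when you mention ``a restriction to $t$ above a threshold'' --- only delivers the bound for $t\ge\sqrt v+L/3$ (equivalently $\theta t\ge1$ at the optimal $\theta$), and your final sentence asserts the unrestricted conclusion without ever closing the small-$t$ regime. This is exactly the one piece of the statement the paper proves itself rather than cites: by the Remark following the theorem and Lemma~\ref{lem:epsnolb}, for $0<t\le\sqrt v+L/3$ the exponent obeys $-\tfrac{t^2}{2v}\,w(tL/v)\ge-\tfrac23>-\log 4$ (the exponent is monotonously decreasing in $t$, and at $t=\sqrt v+L/3$ one checks $-2/z-3=(\sqrt v-L/3)^2/(\sqrt v+L/3)^2\ge0$), so that, using $d=\tr(V)/v\ge1$, the right-hand side exceeds $4e^{-2/3}>1$ and the inequality holds vacuously there. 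Adding this short observation (or, alternatively, noting $d\ge1$ and verifying that Tropp's argument tolerates the extension) would complete your proof; without it, the claim ``for all $t>0$'' is unproven.
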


We specialise Theorem~\ref{thm:matrix-bernstein} to vectors instead of matrices.
The transition to such a type of statement can be found at several places in the literature, for example in Refs.~\cite{tropp2015introduction, minsker2017some, wainwright2019high}. For completeness, we also provide a proof.

\begin{corollary}\label{cor:vector-bernstein}
    Assuming independent zero-mean complex vector-valued random variables $X_1, \dotsc, X_N$ and positive numbers $\sigma$, $\lambda$ such that
    \begin{equation}
        \sigma^2 \geq \frac{1}{N} \sum_i \expv{ \norm*{X_i}_2^2 } \quad\text{and}\quad
        \lambda \geq \norm*{X_i}_2,
    \end{equation}
    we have
    \begin{equation}\label{eq:bernstein-nice}
        \prob{ \norm*{ \frac{1}{N} \sum_i X_i }_2 \geq \epsilon\sigma}
        \leq 8 \exp \left[ -\frac{N\epsilon^2}2\,w \left( \epsilon \frac \lambda\sigma \right) \right],
    \end{equation}
    for all $\epsilon > 0$.
\end{corollary}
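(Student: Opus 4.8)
The plan is to deduce the vector inequality from the matrix Bernstein inequality of Theorem~\ref{thm:matrix-bernstein} through the standard Hermitian dilation (self-adjoint embedding) of a vector into a matrix. For each $i$, I would set
\begin{equation}
    \tilde X_i = \begin{pmatrix} 0 & X_i \\ X_i^\dagger & 0 \end{pmatrix},
\end{equation}
which is an $(n+1)\times(n+1)$ Hermitian random matrix whenever $X_i\in\mathbb C^n$, and is zero-mean because $X_i$ is. The single fact that makes the dilation useful is that $\tilde X_i$ has eigenvalues $\pm\norm{X_i}_2$ together with zeros, so the largest eigenvalue of any sum of dilations equals the Euclidean norm of the corresponding vector sum. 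In particular, writing $S=\sum_i\tilde X_i$, the top eigenvalue of $S$ is exactly $\norm{\sum_i X_i}_2$, so the event $\{\norm{\tfrac1N\sum_i X_i}_2\ge\epsilon\sigma\}$ coincides with $\{S\not\prec N\epsilon\sigma\,\eye\}$. This is precisely the event controlled by Theorem~\ref{thm:matrix-bernstein}, and since the relevant eigenvalue is automatically the largest, no union bound or extra factor of two is incurred.

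Next I would identify the parameters $V$ and $L$ that feed into the matrix inequality. A direct computation gives $\tilde X_i^2=\diag(X_iX_i^\dagger,\norm{X_i}_2^2)$, hence $\tr(\tilde X_i^2)=2\norm{X_i}_2^2$. Taking $V=\sum_i\expv{\tilde X_i^2}$, its off-diagonal blocks vanish and it acquires the block form $V=\diag(A,c)$ with $A=\sum_i\expv{X_iX_i^\dagger}\succcurlyeq0$ and $c=\sum_i\expv{\norm{X_i}_2^2}$. Since $A$ is positive semidefinite, $\norm{A}_\infty\le\tr A=c$, so $v=\norm{V}_\infty=c$ while $\tr V=2c$. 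The crucial consequence is that the intrinsic dimension is exactly $d=\tr(V)/v=2$, independently of the ambient dimension $n$, which is what turns the prefactor $4d$ of Theorem~\ref{thm:matrix-bernstein} into the constant $8$ of Eq.~\eqref{eq:bernstein-nice}. For the operator bound I would use $L\eye\succcurlyeq\tilde X_i$ with $L=\lambda$, valid because $\lambda_{\max}(\tilde X_i)=\norm{X_i}_2\le\lambda$.

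Finally I would apply Theorem~\ref{thm:matrix-bernstein} with $t=N\epsilon\sigma$ and the parameters above. With $v=c$ and $L=\lambda$ the exponent is $-\tfrac{t^2}{2v}\,w(tL/v)=-\tfrac{3t^2}{2(3c+t\lambda)}$, and substituting $t=N\epsilon\sigma$ it simplifies after cancellation to $-\tfrac{N\epsilon^2}{2}\,w(\epsilon\lambda/\sigma)$ once $c$ is replaced by $N\sigma^2$. The only inequality used here is the hypothesis $c=\sum_i\expv{\norm{X_i}_2^2}\le N\sigma^2$: because $v\mapsto -3t^2/[2(3v+t\lambda)]$ is increasing in $v$, enlarging $v=c$ up to $N\sigma^2$ can only weaken the exponent, so the bound evaluated at $N\sigma^2$ dominates the one actually produced. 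Combining the prefactor $4d=8$ with this exponent yields Eq.~\eqref{eq:bernstein-nice} for every $\epsilon>0$. I expect the only genuinely delicate step to be the bookkeeping around the intrinsic dimension: one must verify that $v$ is governed by the scalar block rather than by $A$ (so that $d=2$ instead of something dimension-dependent) and, at the same time, track the direction of the $c\le N\sigma^2$ estimate so that it strengthens rather than weakens the final bound.
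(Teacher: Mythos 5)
Your proof is correct and follows essentially the same route as the paper: Hermitian dilation of each $X_i$, intrinsic dimension $2$ turning the prefactor $4d$ into $8$, $L=\lambda$ from $\lambda_{\max}$ of the dilation, and $t=N\epsilon\sigma$. The only cosmetic difference is bookkeeping: you take $V=\sum_i\expv{\tilde X_i^2}$ exactly (so $d=2$) and then relax $v=c$ to $N\sigma^2$ by monotonicity of the exponent in $v$, whereas the paper builds the slack in directly by choosing the scalar block $V_1=N\sigma^2$ and bounding $d\le 2$ --- both are valid.
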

\begin{proof}
    We define the Hermitian dilation
    \begin{equation}
        D\colon \vec x\mapsto
        \begin{pmatrix}
            0      & \vec x^\dag \\
            \vec x & 0
        \end{pmatrix}.
    \end{equation}
    and apply Theorem~\ref{thm:matrix-bernstein} to $Y_i=D(X_i)$. Consequently, we need to provide upper bounds $V\succcurlyeq \sum_i \expv{Y_i^2}$ and $L\eye\succcurlyeq Y_i$.
    First, we can use $L=\lambda$ since the largest eigenvalue of $D(\vec x)$ is given by $\norm{\vec x}_2$.
    Second, with $V_1= N\sigma^2$ and $V_2= \sum_i\expv{ X_i X_i^\dag}$, we have
    \begin{equation}
        V=
        \begin{pmatrix}
            V_1 & 0 \\
            0   & V_2
        \end{pmatrix}\succcurlyeq
        \sum_i\mathbb E
        \begin{pmatrix}
            X_i^\dag X_i & 0 \\
            0   & X_i X_i^\dag
        \end{pmatrix}=\sum_i \expv{Y_i^2}
    \end{equation}
    and $\norm{V_2}_\infty\le \sum_i \expv{\norm{X_i X_i^\dag}_\infty}\le V_1$, where the first inequality is due to the triangular inequality and the second inequality follows from $\norm{\vec x \vec x^\dag}_\infty = \norm{\vec x}_2^2$.
    This leads to $v= \norm V_\infty= \max\set{V_1,\norm{V_2}_\infty}= V_1$ and $d=\tr(V)/v=1+\tr(V_2)/V_1\le 2$.
    Defining $\epsilon= t/(N\sigma)$ and using that $D(\vec x)\not\prec t\eye$ is equivalent to $\norm{\vec x}_2\ge t$, we obtain Eq.~\eqref{eq:bernstein-nice}.
\end{proof}

\begin{remark}
In Ref.~\cite{tropp2015introduction}, Theorem~\ref{thm:matrix-bernstein} is proved for $t \geq \sqrt{v} + L/3$. However, one can safely replace this condition by $t > 0$, as presented above. This follows from $\tr(V) \geq \norm{V}_\infty$ and hence $d \geq 1$ and the proof in Ref.~\cite{tropp2015introduction}, or by virtue of the following observation.
\end{remark}

\begin{lemma}\label{lem:epsnolb}
    For $v> 0$, $L>0$, and $0< t\le \sqrt v+L/3$ we have
    \begin{equation}\label{eq:lemine}
        -\frac{t^2}{2v}\,\frac 3{3+tL/v}\ge -\frac23> -\log(4).
    \end{equation}
\end{lemma}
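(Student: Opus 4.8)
The plan is to clear denominators and reduce the first inequality in~\eqref{eq:lemine} to a statement about a single convex quadratic in $t$. Since $v,L,t>0$ we have $3+tL/v>0$, and using $\frac{3}{3+tL/v}=\frac{3v}{3v+tL}$ the left-hand side of~\eqref{eq:lemine} becomes $-\frac{3t^2}{2(3v+tL)}$. The desired bound $-\frac{3t^2}{2(3v+tL)}\ge-\frac23$ is then equivalent, after multiplying through by the positive quantity $6(3v+tL)$, to $q(t)\coloneqq 9t^2-4Lt-12v\le 0$.

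First I would observe that $q$ is an upward-opening parabola, hence convex, so it suffices to check that $q\le 0$ at the two endpoints of the interval $[0,\sqrt v+L/3]$; convexity then yields $q(t)\le 0$ for every $t$ in between, which covers the range $0<t\le\sqrt v+L/3$ of the lemma. At the left endpoint, $q(0)=-12v<0$ is immediate.

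The only genuine computation is the right endpoint. Setting $a=\sqrt v$ and $b=L/3$, so that $t=a+b$, $v=a^2$ and $L=3b$, a direct expansion gives
\begin{equation}
q(\sqrt v+L/3)=9(a+b)^2-12b(a+b)-12a^2=-3(a-b)^2=-3\left(\sqrt v-\tfrac L3\right)^2\le 0,
\end{equation}
a perfect square, so the endpoint bound holds, with equality exactly when $L=3\sqrt v$. Together with $q(0)<0$ and convexity, this establishes the first inequality in~\eqref{eq:lemine}.

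The second inequality $-2/3>-\log 4$ is simply the numerical fact $\log 4=2\log 2>2/3$, which follows from $\log 2>1/3$. The main (and essentially only) obstacle is spotting the reduction to the perfect square $-3(\sqrt v-L/3)^2$; once the claim is recast as $q(t)\le 0$, the convexity argument and the two endpoint evaluations make the remainder routine.
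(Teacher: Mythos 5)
Your proof is correct, and it reaches the same algebraic heart as the paper's: everything hinges on the perfect square $(\sqrt v - L/3)^2$ appearing at the right endpoint $t=\sqrt v+L/3$. The reduction mechanism, however, is different. The paper keeps the rational function $z(t)=-\frac{t^2}{2v}\,\frac{3}{3+tL/v}$ intact, observes (without spelling out the check) that it is monotonically decreasing in $t$, so only the endpoint $t=\sqrt v+L/3$ needs to be examined, and there verifies $-2/z-3=(\sqrt v-L/3)^2/(\sqrt v+L/3)^2\ge 0$. You instead clear denominators first, recasting the claim as $q(t)=9t^2-4Lt-12v\le 0$, whereupon the reduction to endpoints is immediate from convexity of an upward parabola --- no monotonicity assertion required --- at the negligible cost of also checking $q(0)=-12v<0$. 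The two computations are in fact the same object in disguise: at the endpoint, $-2/z-3=-q(t)/(3t^2)$, so your $-3(\sqrt v-L/3)^2$ is exactly the numerator of the paper's ratio of squares. What your route buys is self-containedness: the paper's ``one observes that $z$ is monotonously decreasing'' genuinely requires a small verification (e.g.\ that $t^2/(3v+tL)$ is increasing for $t>0$), which you sidestep entirely, since convexity of a quadratic needs no argument. You also correctly identify the equality case $L=3\sqrt v$, and the second inequality $\log 4=2\log 2>2/3$ is handled the same way in both proofs.
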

\begin{proof}
    We abbreviate the left-hand side of Eq.~\eqref{eq:lemine} by $z$.
    One observes that $z$ is a monotonously
    decreasing function in $t$ and hence, for a lower bound, it is sufficient to consider $t=\sqrt v+L/3$. Then one finds $-2/z-3=(\sqrt v-L/3)^2/(\sqrt v+L/3)^2\ge 0$ and thus the assertion follows.
\end{proof}

\subsection{Bernstein inequality for multinomial samples}
\label{ap:bernstein-multinomial}

Our goal is to manipulate Corollary~\ref{cor:vector-bernstein} into a form that resembles a quantum state tomography experiment (see Section \ref{sec:state-tomography-and-crs} in the main text). Essentially, then, the random variables $X_i$ in Eq.~\eqref{eq:bernstein-nice} should be replaced by $G_s ( \vec{p}_s - \vec{f}_s )$, with the observed frequencies $\vec{f}_s$, the probabilities $\vec{p}_s$, and some suitable linear maps $G_s$. These maps will later give us freedom to represent, for example, an inverse of the measurement map.

The construction is done in three steps: First, we consider trials from multi-Bernoulli distributions and investigate how to bound $\sigma$ and $\lambda$, which are the necessary quantities to apply Corollary~\ref{cor:vector-bernstein}. Then we reformulate these results in terms of multinomial counts. Lastly, we consider what is the optimal strategy to distribute the total number of samples across several multinomial trials.

\subsubsection{Application to Bernoulli trials}
\label{ap:multinoulli-trial}

We consider the case of $N$ vector-valued random variables $X_i= G_i(\vec p_i-\vec f_i)$ with $i=1,\dotsc, N$.
Here, each $\vec{f}_i=(f_{a|i})_a$ is multi-Bernoulli distributed according to $\vec{p}_i=(p_{a|i})_a$ and each $G_i$ is a linear map.
By multi-Bernoulli we denote a trial where exactly one of the outcomes labelled by $a$ can occur, so that $f_{a|i}\in \set{0,1}$ with $\sum_a f_{a|i}=1$, and the probability that outcome $a$ occurs is $p_{a|i}$. Thus, $\vec p_i= \expv{\vec f_i}$.
This generalises the situation of a biased coin or a loaded die.
From $\expv{ f_{a|i} f_{b|i} } = p_{a|i} \delta_{a,b}$ we obtain
\begin{equation}\label{eq:sigma2val}
    S_i = \expv{ \norm{X_i}_2^2 }
    = \vec{p}_i \cdot \diag(G_i^\dag G_i) - \norm*{ G_i \vec{p}_i }_2^2.
\end{equation}
This will later aid us in bounding $\sigma$. Likewise, to bound $\lambda$, it is helpful to compute
\begin{equation}
    \Lambda_i
    = \max_{\vec{f}_i}(\norm{X_i}_2^2)
    = \norm*{G_i \vec{p}_i}_2^2 + \max_a\big( \big[ \diag(G_i^\dag G_i) - 2\Re(G_i^\dag G_i\vec p_i ) \big]_a \big),
\end{equation}
where we used $f_{a|i}f_{b|i}=f_{a|i}\delta_{a,b}$.
We subsequently use the matrix norm
\begin{equation}
    \norm*{G}_{2,\infty} = \sqrt{\max_a([\diag(G^\dag G)]_a)},
\end{equation}
standing for the largest Euclidean norm among all column vectors of $G$.

We are now interested in eliminating the dependency on $\vec p_i$ by computing suitable upper bounds on $S = \sum_i S_i$ and $\Lambda= \max_i \Lambda_i$, so that we can obtain good values for $\sigma \geq \sqrt{S/N}$ and $\lambda \geq \sqrt{\Lambda}$ in Corollary~\ref{cor:vector-bernstein}.
If we assume that the set of admissible probability vectors $( \vec{p}_i )_i$ is convex --- for example, that the probabilities are from a quantum state tomography experiment --- then the resulting optimisation for $S$ can be solved with convex optimisation techniques.
We propose that in many situations it is sufficient to use $\sigma= \sqrt{S_\mathrm{max}/N}$, where
\begin{equation}
    S_\mathrm{max}= \sum_i \norm*{G_i}^2_{2,\infty}.
\end{equation}
This upper bound follows by omitting the second term in $S_i$ and relaxing the constraints on $\vec{p}_i$ to be $p_{a|i} \geq 0$ and $\sum_a p_{a|i}= 1$.
The intuition for ignoring the second order terms is that if the entries of $\diag(G_i^\dag G_i)$ are similar, then the maximum will be roughly attained for $p_{a|i}=1/\ell_i$ with $\ell_i$ the number of entries in $\vec p_i$.
The first term in $S_i$ will hence be much larger than the second term, if $\ell_i$ is large.

For $\Lambda_i$ we can use a rather loose upper bound, because it only weakly impacts the right-hand side of Eq.~\eqref{eq:bernstein-nice}.
Here, we propose to use $\lambda=\sqrt{\Lambda_\mathrm{max}}$, where
\begin{equation}
    \Lambda_\mathrm{max}= 4\max_i \norm*{G_i}_{2,\infty}^2.
\end{equation}
This follows as an upper bound, because the convexity of $\norm{G_i\vec p}_2^2$ in $\vec p_i$ gives us $\norm{G_i\vec p_i}_2^2\le \norm{G_i}_{2,\infty}^2$ and due to
\begin{equation}
    -\Re(G_i^\dag G_i\vec p_i)_a
    \le \abs{(G_i\vec e_a)^\dag(G_i\vec p_i)}
    \le \norm{G_i\vec e_a}_2\norm{G_i \vec p_i}_2
    \le \norm{G_i}_{2,\infty}^2.
\end{equation}
Here, $\vec e_a$ is the $a$th canonical vector such that $(G_i^\dag G_i\vec p_i)_a=(G_i\vec e_a)^\dag (G_i\vec p_i)$ and in the second step we used the Cauchy-Schwarz inequality.

\subsubsection{Transition to multinomial samples}

With $\sigma$ and $\lambda$ bounded for sums of multi-Bernoulli trials, we now consider multinomial distributions.
The multinomial distribution $\mathcal D(\vec p;n)$ arises from the outcomes in $n$ independent trials of the same multi-Bernoulli experiment.
That is, if $X_i$ are multi-Bernoulli distributed according to $\vec p$, then
\begin{equation}
    \left(\sum_{i=1}^n X_i\right) \dist \mathcal D(\vec p;n).
\end{equation}
Correspondingly, we assume that for our $N$ multi-Bernoulli variables $\vec f_i$, one can partition $i = 1,\dotsc,N$ into $m$ groups of size $n_s$ for $s= 1,\dotsc,m$, such that all $G_i$ and $\vec p_i$ in a group are the same.
Summing up the random variables $X_i$ in each group yields $m$ random variables $Y_s= n_s G_s(\vec p_s - \vec f_s)$ with $\vec f_s \dist \mathcal D(\vec p_s; n_s) / n_s$.
Writing
\begin{equation}
    N = \sum_s n_s,\quad
    q_s= \frac{n_s}{N},\quad
    \gamma_s=\norm*{G_s}_{2,\infty},\quad
    \gamma_\mathrm{max}= \max_s \gamma_s,\quad
    \sigma^2 = \sum_s q_s\gamma_s^2,
\end{equation}
we obtain from Corollary~\ref{cor:vector-bernstein} that
\begin{equation}\label{eq:bern-mulnom-orig}
    \prob{ \norm*{\sum_s q_s G_s(\vec p_s - \vec f_s)}_2 \geq \epsilon\sigma}
    \leq 8\exp \left[ -\frac{N\epsilon^2}2 \,
     w\left( \epsilon \frac{2\gamma_\mathrm{max}}\sigma \right) \right].
\end{equation}

\subsubsection{Sampling strategy}
\label{ap:sampling-strategies}

We consider now the design of an experiment where one can redistribute the number of samples per setting under the constraints that $N$, $\tilde G_s=G_s q_s$, and $\tilde \epsilon= \epsilon\sigma$ are all fixed.
This fixes also the confidence region function
\begin{equation}
    C\colon (\vec f_s)_s\mapsto \Set{(\vec p_s)_s|\norm*{\sum_s \tilde G_s(\vec p_s - \vec f_s)}_2 \le \tilde \epsilon}
\end{equation}
and we aim to choose $q_s$ such that $\prob{  C \not\ni (\vec p_s)_s  }$ is minimised.
Using the upper bound in Eq.~\eqref{eq:bern-mulnom-orig}, we minimise $3\sigma^2+2\tilde\epsilon\gamma_\mathrm{max}$ while keeping $\tilde\gamma_s= \norm{\tilde G_s}_{2,\infty}=\gamma_s q_s$ constant.
The Cauchy-Schwarz inequality yields
\begin{equation}
    \sigma^2=
    \sum_s \Big(\frac{\tilde\gamma_s}{\sqrt{q_s}}\Big)^2
    \sum_s \sqrt{q_s}^2\ge \Big(\sum_s \tilde\gamma_s\Big)^2
\end{equation}
and equality holds for $q'_s=\tilde\gamma_s/\sum_t\tilde\gamma_t$.
At the same time, we have $\gamma_\mathrm{max}\ge \sigma\ge \sum_s\tilde\gamma_s$ with equality also at $q'_s$.
Hence it is optimal to distribute the number of samples according to $n_s=q_s'N$.

\subsubsection{Summary}
\label{ap:summary-strategies}

\begin{mybox}{Multinomial estimation}
    For $s = 1,\dotsc,m$ assume matrices $G_s \in \mathbb C^{d \times \ell_s}$ and numbers of trials $n_s\in \mathbb N$.
    Let $\gamma_s=\norm{G_s}_{2,\infty}$, $N=\sum_sn_s$, $q_s=n_s/N$,
    \begin{equation}\label{eq:sigmaeta}
        \sigma = \sqrt{\sum_s \gamma_s^2/q_s}, \quad\text{and}\quad \eta=\frac2\sigma\max_s(\gamma_s/q_s).
    \end{equation}
    For probability vectors $\vec p_s \in \mathbb R^{\ell_s}$ and multinomial distributions $\mathcal D(\vec p_s;n_s)$ assume independent random variables $\vec f_s \dist \mathcal D(\vec p_s;n_s)/n_s$. Then
    \begin{equation}\label{eq:mulest}
        \prob{ \norm*{\sum_s G_s(\vec p_s - \vec f_s)}_2 \ge \epsilon \sigma }
        \le 8\exp \left[ -\frac{N\epsilon^2}2 \frac{3}{3+\eta\epsilon}\right].
    \end{equation}
    It is optimal to have $q_s=\gamma_s/\sum_t \gamma_t$, where $\sigma=\sum_s\gamma_s$ and $\eta=2$.
\end{mybox}

From the above we obtain the confidence region
\begin{equation}
    C(\epsilon;\vec f_1,\dotsc,\vec f_m)=
    \Set{(\vec p_1,\dotsc,\vec p_m)|\norm*{\sum_s G_s(\vec p_s - \vec f_s)}_2 \le \epsilon\sigma},
\end{equation}
where for an intended confidence level $1 - \delta$ one has to choose
\begin{equation}\label{eq:epsilon}
    \epsilon = \frac6\eta\sqrt{u}(\sqrt u+\sqrt{u+1}) \quad \text{with}\quad u = \frac{\eta^2}{18 N} \log(8/\delta).
\end{equation}
Notice that $\epsilon$ is approximately proportional to $1/\sqrt N$ and $\sqrt{2 \log(8/\delta)}<\epsilon\sqrt N< \sqrt{3\log(8/\delta)}$ for $N>\frac{4}3\eta^2\log(8/\delta)$.

\subsection{Two confidence regions for quantum state tomography}
\label{ap:crs}

We recall that a quantum state tomography experiment boils down to $N$ systems prepared in the same unknown state $\rho$, from which we obtain the frequencies $f_{a|s}$ representing the occurrences of the outcome $a$ for the measurement setting labelled by $s$.
For $n_s$ measurements of setting $s$, these frequencies are distributed according to a multinomial distribution, $\vec f_s \dist \mathcal D(\vec p_s; n_s)/n_s$, with $\vec p_s = ( \tr(\rho E_{a|s}) )_a$ and where $E_{a|s}$ is the measurement operator for outcome $a$. We combine all probabilities to $\vec p=\bigoplus_s \vec p_s$ and similarly, for the frequencies $\vec f_s = (f_{a|s})_a$, we write $\vec f=\bigoplus_s \vec f_s$.

\subsubsection{Construction of the regions}\label{ap:2regions}
It is convenient to define the total measurement map $M\colon \rho \mapsto \vec p$ as a linear map from the space of self-adjoint operators to the real vector space spanned by all $\vec p$.
For tomography, the map $M$ is assumed to be injective and hence one can choose a left-inverse $M^+$ of $M$, that is, $M^+$ is such that $M^+M A=A$ for any self-adjoint operator $A$.
A common choice for $M^+$ is the pseudoinverse of $M$.
For data $\vec f$, we define the point estimate $\hat\rho(\vec f)= M^+\vec f$, which yields the fiducial state on average, that is, $\expv{\hat\rho}=\rho$.
Finally, we write $M^+_s$ for the component of $M^+$ that acts on $\vec p_s$, such that $M^+ \vec p = \sum_s M^+_s \vec p_s$.

For the first confidence region, we observe that $\sum_s M^+_s(\vec p_s-\vec f_s)= \rho-\hat\rho(\vec f)$. This suggests to let $G_s = M^+_s$ in our multinomial estimation procedure in Section~\ref{ap:summary-strategies}, yielding the region
\begin{equation}
    C_A(\epsilon; \vec f)
    = \set{ \rho \mid \norm*{\rho-\hat\rho(\vec f)}_\mathrm{HS} \leq \epsilon \sigma_A},
\end{equation}
where $\norm{A}_\mathrm{HS}= \sqrt{\tr(A^\dag A)}$ denotes the Hilbert--Schmidt norm on the self-adjoint operators.
The second region is obtained by noticing that $\sum_s MM^+_s(\vec p_s - \vec f_s) = M[\rho - \hat\rho(\vec f)]$ and thus we may alternatively set $G_s = MM^+_s$. This leads to
\begin{equation}
    C_B(\epsilon;\vec f)
    = \set{\rho \mid \norm*{\rho-\hat\rho(\vec f)}_M\le \epsilon \sigma_B },
\end{equation}
where $\norm{A}_M = \norm{M A}_2$.
For either of the regions, $\epsilon$ is determined by the intended confidence level $1 - \delta$ via Eq.~\eqref{eq:epsilon} where the parameters $\sigma_{A/B}$ and $\eta_{A/B}$ are computed from Eq.~\eqref{eq:sigmaeta}.

The characteristic difference between $C_A$ and $C_B$ is in their shape.
While $C_A$ describes a sphere with radius $\epsilon\sigma_A$ in state space, $C_B$ describes an ellipsoid with semiaxes $\epsilon A_j$, both with respect to the Hilbert--Schmidt norm.
The operators $A_j$ are given by
\begin{equation}\label{eq:genhalfax}
    A_j = \frac{\sigma_B}{\sqrt{\xi_j}}\, \Xi_j,
\end{equation}
where the numbers $\xi_j$ and the self-adjoint operators $\Xi_j$ form a spectral decomposition of $M^\dag M$, that is, $M^\dag M\Xi_j=\xi_j\Xi_j$ and $\tr(\Xi_j\Xi_{j'})=\delta_{j,j'}$.

\subsubsection{Optimising the left-inverse}\label{ap:optim-leftinverse}
The confidence regions $C_A$ and $C_B$ depend on the left-inverse $M^+$ of the measurement map $M$. If $M$ is singular, then the left-inverse is not unique and a common choice is the pseudoinverse. But this is not always optimal. For both confidence regions, the choice of $M^+$ determines the centre of the region via $\hat \rho(\vec f)=M^+\vec f$ and, crucially, it also influences the size of the region via $\sigma_{A/B}$.
To avoid confusion, in this section we write $M^+$ for the general left-inverse and $M^-$ for the pseudoinverse.

We first consider the optimal choice for region $C_B$ and assume that the sampling strategy is later chosen such that $\sigma_B=\sum_s \norm{MM^+_s}_{2,\infty}$ with $M^+=(M^+_1,M^+_2,\dotsc)$.
A left-inverse that achieves the minimum $\hat\sigma_B$ can be readily found by solving the second order cone optimisation
\begin{equation}\label{ap:primal-opt-linv}
    \hat \sigma_B= \min_X \sum_s\norm{Q_s+MXK_s}_{2,\infty},
\end{equation}
where $Q_s=MM^-_s$, the row vectors of $K=(K_1,K_2,\dotsc)$ span the kernel of $M^\dag$, and $X$ is any appropriately shaped matrix. This is the case, because any left-inverse of $M$ can be written as $M^-+XK$. We compute a general lower bound via the Lagrange dual problem \cite{boyd}, yielding
\begin{equation}\label{ap:dual-opt-linv}
    \hat\sigma_B\ge \max_{(Y_s)_s}\set{ \sum_s\tr(Q_sY_s) | \sum_sK_s Y_s M=0 \text{ and } \norm{Y_s^\dag}_{2,1}\le 1},
\end{equation}
where $\norm{A}_{2,1}$ is the sum of the column norms of $A$ and $Y_s$ are any appropriately shaped matrices.
For $Y_s=Q_s^\dag/\max_t(\norm{Q_t}_{2,1})$ and $X=0$, we obtain the bounds
\begin{equation}
    \frac{\sum_s \norm{Q_s}_{2,2}^2}{\max_s(\norm{Q_s}_{2,1})}
    \le\hat\sigma_B\le
        \sum_s\norm{Q_s}_{2,\infty}.
\end{equation}
Hence, the pseudoinverse is optimal, if all column norms of $MM^-_s$ are equal to $\nu_s$ and all $\ell_s\nu_s$ are equal, where $\ell_s$ is the number of outcomes for setting $s$.

The analysis for the confidence region $C_A$ and $\sigma_A$ is similar, because the duality relation between Eq.~\eqref{ap:primal-opt-linv} and Eq.~\eqref{ap:dual-opt-linv} holds for arbitrary matrices $Q_s$, $M$, and $K_s$. Hence we replace in both equations $M$ by the identity map and $Q_s$ by $M^-_s$. Then the pseudoinverse is optimal if all column norms of $M^-_s$ are equal to $\mu_s$ and all $\ell_s\mu_s$ are equal.

In practice, the pseudoinverse is often the appropriate choice. First, the above sufficient conditions for optimality are satisfied by the large class of local symmetric measurements, see Section~\ref{ap:sic-v3}. Second, we expect that for most other practically relevant tomographic measurements, either this is also the case, or an optimised pseudoinverse gives only a minor improvement in $\sigma_{A/B}$.

\section{Construction of confidence regions for quantum state tomography}
\label{ap:explicit-construction}

In this section we give a general construction scheme, see Section~\ref{ap:construct-general}, and explicit examples for the confidence regions $C_A$ and $C_B$.
Specifically, in Section~\ref{ap:sic-v3}, we analyse the large class of local symmetric qudit measurements.
These results are then specialised to the Pauli-bases measurement in Section~\ref{ap:pauli-tomography} and measurement of all Pauli observables in Section~\ref{ap:pauli-obs2}, both for an arbitrary number of qubits.

\subsection{General scheme}
\label{ap:construct-general}

Here we provide hands-on instructions to construct the confidence regions $C_A$ and $C_B$ introduced in Section~\ref{ap:crs}.
The analysis so far allows us to treat measurements settings individually by means of the results summarised in Section~\ref{ap:summary-strategies}, offering an optimised performance of the confidence region.
But, from a practical perspective, the generality of this analysis complicates the actual calculations much more than the typical gain in performance could justify.

This motivates us to employ a common simplification, where one assumes that the tomographic scheme consists of a single generalised measurement described by a positive operator-valued measure, that is, a collection of positive semidefinite operators $E_a\succcurlyeq 0$ with $\sum_a E_a=\eye$.
This is no loss of generality, in the following sense.
If one has several measurements settings $s=1, \dotsc, m$, one assumes that they are performed at random with probability $q_s$ each.
Then, if the operator $E'_{a'|s'}$ describes the measurement outcome $a'$ of measurement setting $s'$, the unified measurement has outcomes $a=(a',s')$ described by the operators $E_a=q_{s'} E'_{a'|s'}$. Correspondingly, one has to replace the observed frequency $f'_{a'|s'}$ of outcome $a'$ for setting $s'$ by $f_a=q_{s'}f_{a'|s'}$.

For the upcoming computations, it is convenient to use a vectorisation $\vec v(A)\in \mathbb R^{d^2}$ of the self-adjoint operators $A$ on a $d$-dimensional Hilbert space with the property that $\vec v(A)\cdot \vec v(B)= \tr(A B)$. A simple example of such a vectorisation is given by concatenating the rows of $\Re(A)+\Im(A)$ of a matrix-representation of $A$; for the inverse $\vec v^{-1}$, one partitions $\vec v(A)$ such that it yields a $d\times d$ matrix $V$ and then computes $A=\frac12(V+V^\intercal)+\frac i2(V-V^\intercal)$.

\medskip
\begin{mybox}{Construction steps}
    We construct the confidence regions $C_A$ and $C_B$ for the confidence level $1-\delta$, assuming that the generalised measurement $(E_1,E_2,\dotsc)$ has been preformed a total of $N$ times, yielding the observed frequencies $\vec f=(f_1,f_2,\dotsc)$.
    \smallskip
    \begin{enumerate}[(1)]
        \item
        Determine $\epsilon$ as
        \begin{equation}
            \epsilon=3\sqrt{u}(\sqrt u+\sqrt{u+1})
            \quad\text{with}\quad u=\frac{2}{9N}\log(8/\delta).
        \end{equation}
        \item
        Construct $M$ as the matrix where row $a$ is ${\vec v(E_a)}$.
        \item
        Compute the pseudoinverse $M^+$ of $M$ and let $\hat\rho=\vec v^{-1}(M^+ \vec f)$.
        \item
        For $C_A$: Compute $\sigma_A$ as the maximal Euclidean norm of the columns of $M^+$.
        \item
        For $C_B$: Compute $\sigma_B$ as the maximal Euclidean norm of the columns of $MM^+$.
        \item
        For $C_B$: Compute the spectral decomposition of ${M}^\intercal M$, yielding the eigenvalues $\xi_j$ and corresponding orthonormal eigenvector $\vec x_j$ and let
        \begin{equation}
            A_j=\frac{\sigma_B}{\sqrt{\xi_j}}
            \,\vec v^{-1}(\vec x_j).
        \end{equation}
    \end{enumerate}
\end{mybox}
\medskip

Then, the confidence region
\begin{itemize}
    \item $C_A$ is given by all $\rho$, such that
    \begin{equation}
      \norm{\rho-\hat\rho}_\mathrm{HS}\le \epsilon\sigma_A,
    \end{equation}
    where $\norm{A}_\mathrm{HS}=\sqrt{\tr(A^\dag A)}$ is the Hilbert--Schmidt norm.
    $C_A$ is a sphere in the Hilbert--Schmidt norm, centred at $\hat \rho$ and with radius $\epsilon\sigma_A$.
    \item $C_B$ is given by all $\rho$, such that
    \begin{equation}
        \norm{M \vec v(\rho-\hat \rho)}_2\le \epsilon \sigma_B,
    \end{equation}
    where $\norm{\vec x}_2$ is the Euclidean norm.
    $C_B$ is an ellipsoid in the Hilbert--Schmidt norm, centred at $\hat \rho$ and with semiaxes $\epsilon A_j$.
\end{itemize}

\begin{remark}
    We argued that one can combine all measurement setting into a single generalised measurement.
    However, this assumes that the numbers of samples per setting $n_s= q_sN$ are also random, that is, they are a sample of the multinomial distribution $\mathcal D(\vec q,N)$.
    One may wonder whether the above construction methods are also valid, if one chooses $n_s$ beforehand, so that they are fixed rather than random.
    This is intuitively true, because choosing $n_s$ at random cannot reduce the statistical fluctuations in the data and hence letting $q_s=n_s/N$ should not require a larger confidence region.
    One can also see this by using formal manipulations of the result in Section~\ref{ap:summary-strategies}.
    We assume that the single measurement with operators $E_{a,s}$ is composed from different settings via $E_{a,s}=q_sE'_{a|s}$ with $\sum_a E'_{a|s}= \eye$.
    Correspondingly, we define $p'_{a|s} = \tr(E'_{a|s}\rho)= \tr(E_{a,s}\rho)/q_s=p_{a,s}/q_s$, assume that the data $f'_{a|s}$ is obtained by sampling from $\mathcal D(\vec p'_s,n_s)/n_s$, and let $f_{a,s}=q_s f'_{a|s}$.
    Writing $G=(G_1,\dotsc,G_m)$, we define $G_s'=q_sG_s$, where $G=M^+$ for $C_A$ and $G=MM^+$ for $C_B$.
    Then $G_s(\vec p_s-\vec f_s)=G'_s(\vec p'_s-\vec f'_s)$ and Eq.~\eqref{eq:mulest} applies with
    \begin{equation}
        \sigma^{\prime2}=\sum_s \norm{G'_s}_{2,\infty}^2/q_s=\sum_s q_s\norm{G_s}_{2,\infty}^2\le \max_s \norm{G_s}_{2,\infty}^2=\norm{G}_{2,\infty}^2=\sigma^2.
    \end{equation}
    Hence, we can use $\sigma$ instead of $\sigma'$ and then $\eta'=2\max_s(\norm{G'_s}/q_s)/\sigma=2$, yielding the same confidence region that one obtains from the formulation with a single measurement.
\end{remark}

\subsection{Local symmetric measurements}\label{ap:sic-v3}

A convenient measurement scheme for $q$ qudits consists of measuring on each qudit the same local generalised measurement with $n\ge d^2$ outcomes. If this local measurement is highly symmetric, one can obtain explicit expressions for the radius of the confidence region $C_A$ and the semiaxes of the confidence region $C_B$.

Inspired by the analysis in \cite{graydon2016quantumconicaldesigns} and \cite[App.
 B]{guta2020fast}, we consider a measurement with $n$ measurement operators $E_a= \vartheta \Pi_a$, where $a=1,\dotsc, n$ and $\Pi_a$ are projections of rank $r$.
These operators shall satisfy the symmetry condition
\begin{equation}\label{eq:symcond}
    \sum_a E_a\otimes E_a = \alpha S+\beta\eye,
\end{equation}
for the swap operator $S\ket{\phi}\ket\psi=\ket\psi\ket\phi$.
This is equivalent to require that the square $F$ of the measurement map $M\colon A\mapsto (\tr(E_a A))_a$ obeys
\begin{equation}
    FA= M^\dag MA =\sum_a E_a\tr(E_aA)=\sum_a \tr_1\left((E_a\otimes E_a)(A\otimes \openone)\right) = \alpha A+\beta \eye \tr(A),
\end{equation}
where we used $\tr_1(A\otimes B)=B\tr(A)$ and $\tr_1(S(A\otimes B))=AB$.
Although we did not fix $\vartheta$, $\alpha$, and $\beta$, they are necessarily given by
\begin{equation}
    \vartheta=\frac d{nr},\quad
    \alpha = \vartheta \frac{d-r}{d^2-1}\quad\text{and}\quad
    \beta = \vartheta-\alpha d > 0.
\end{equation}
This can be seen by evaluating the trace of $\sum_a E_a=\eye$, yielding $\vartheta r n=d$, then taking the trace of Eq.~\eqref{eq:symcond} to get $\vartheta^2 r^2 n= \alpha d + \beta d^2$, and finally, by realizing that $\tr(\Pi_a\otimes \Pi_a S)=\tr(\Pi_a^2)= r$, so that after multiplying both sides of Eq.~\eqref{eq:symcond} by $S$, the trace yields $\vartheta^2 r n=\alpha d^2 + \beta d$.

The inverse of $F$ is readily verified to be given by
\begin{equation}
    F^{-1}A= \frac 1\alpha A-\frac{n\beta}{d\alpha}\eye\tr(A).
\end{equation}
This enables us to compute the pseudoinverse $M^+=F^{-1}M^\dag$.
We write $Z_a$ for column $a$ of $M^+$, that is,
\begin{equation}
    Z_a=F^{-1}E_a=\frac1\alpha E_a-\frac\beta\alpha \eye.
\end{equation}
First, we obtain for the estimator $\hat\rho(\vec f) = M^+\vec f$ the explicit expression
\begin{equation}
    \hat \rho(\vec f)=\frac{d^2-1}{d-r}\sum_a f_a\Pi_a-\frac{dr-1}{d-r}\eye.
\end{equation}
Second, the diagonal entries of $M^{+\dag}M^+$ are given by $\tr(Z_a^2)$. These are clearly independent of $a$, yielding the radius $\epsilon\sigma_A$ of the confidence region $C_A$ with
\begin{equation}
    \sigma_A= \sqrt{\frac rd \frac {(d^2-1)^2}{d-r}+ \frac 1d}.
\end{equation}
For $C_B$, we use that $MM^+$ is an orthogonal projection, so that
$\diag[(MM^+)^\dag(MM^+)]_a= \tr(E_a Z_a)=d^2/n=\sigma_B^2$ and that the eigenvalues of $F$ are $d/n$ in direction $\eye$ and $\alpha$ for all zero-trace operators.
Hence the semiaxes of $C_B$ are $\epsilon A_\mu$ with
\begin{equation}
    A_\mu=\left(r \frac{d^2-1}{d-r}\right)^{(1-\delta_{\mu,0})/2}\sqrt d\,B_\mu
    \quad\text{for } \mu=0,\dotsc,d^2-1,
\end{equation}
where the operators $B_\mu$ form any orthonormal basis of the self-adjoint operators with $B_0=\eye/\sqrt d$.

Next, we consider a system of $q$ qudits, where each qudit is subjected to the same symmetric measurement.
We write $E_{\vec a}=\bigotimes_k E_{a_k}$ for the global measurement operators, where $\vec a=(a_1,a_2,\dotsc,a_q)$. The total measurement map $M\colon \rho\mapsto (\tr(E_{\vec a}\rho))_{\vec a}$ is then given by $M=M_L^{\otimes q}$, where $M_L$ is the measurement map for the single qudit.
Similarly, $F=F_L^{\otimes q}$, $M^+=(M_L^+)^{\otimes q}$, and $Z_{\vec a}=\bigotimes_k (Z_L)_{a_k}$. It follows that the confidence region $C_A$ has radius $\epsilon\sigma_A$ with
\begin{equation}
    \sigma_A= (\sigma_{A,L})^q= \left(\frac rd \frac {(d^2-1)^2}{d-r}+ \frac 1d\right)^{q/2}.
\end{equation}
The semiaxes of $C_B$ are $\epsilon A_{\vec \mu}$ with
\begin{equation}
    A_{\vec \mu}
    = \bigotimes_{k=1}^q A_{\mu_k}
    = \left(r \frac{d^2-1}{d-r}\right)^{(q-\chi_{\vec \mu})/2}d^{q/2}
    \bigotimes_{k=1}^q B_{\mu_k},
    \quad \vec\mu\in \set{0,1,\dotsc d^2-1}^q
\end{equation}
where $\chi_{\vec\mu}$ counts the number of zeros in $\vec\mu$, that is, $\chi_{\vec\mu}=\sum_k \delta_{\mu_k,0}$.

We observe that the sizes of $C_A$ and $C_B$ do not depend on the specific choice of the local measurement and their number of outcomes, as long as they are symmetric as specified above.
This can be surprising, since the symmetric measurements cover a wide range of common measurement schemes. For unit rank and a single system, $r=1$ and $q=1$, these are the structured measurements discussed in Ref.~\cite{guta2020fast}. Examples are 
symmetric informationally-complete measurements \cite{renes2004sic}, the Pauli-bases measurement (see also Section~\ref{ap:pauli-tomography}), and the uniform measurement with operators proportional to $\dyad{\psi}\dd\psi$.
In the case of qubits, $d=2$, we can express the conditions for a structured measurement using Bloch vectors $\vec x_a=(\tr(\Pi_a\sigma_j)_j$ as $\vec x_a\cdot\vec x_a=1$, $\sum_a \vec x_a=\vec 0$, and $3\sum_a \vec x_a \vec x_a^\intercal= n \eye$, that is, $(\vec x_a)_a$ is a tight unit norm frame with mean $\vec 0$.
Another important case of symmetric measurements is the case of Pauli observables on $Q$ qubits, see Section~\ref{ap:pauli-obs2}, for which $d=2^Q$, $r=d/2$, and $q=1$.

\begin{remark}
    We may reconsider $S_i$ in Eq.~\eqref{eq:sigma2val} for a possible improvement and compute exact upper bounds on $S_i$ in order to test whether $S_\mathrm{max}$ is a good approximation.
    For this one first realises that $\norm{G^{A/B}\vec p}_2^2$ is minimised for the completely mixed state with $p_a= n^{-q}$.
    Then one finds $\max_\rho(S_i^A) = S_\mathrm{max}^A-d^{-q}$ and $\max_\rho(S_i^B) = S_\mathrm{max}^B-n^{-q}$, yielding only a negligible improvement and hence our approximation to use $S^{A/B}_\mathrm{max}$ is sufficiently good. In addition, one may also consider whether the pseudoinverse is a good choice in the light of the discussion in Section~\ref{ap:optim-leftinverse}. This is indeed the case, because the columns of $G^{A/B}$ have all the same norm, respectively.
\end{remark}

\subsection{Pauli-bases measurement}
\label{ap:pauli-tomography}

In the Pauli-bases measurement scheme, each qubit is measured using one of the Pauli operators $\sigma_s$, where $s \in \set{1, 2, 3}$ represents $X$, $Y$ or $Z$.
Each of these measurements returns one of two possible outcomes, labelled by $a \in \set{-1, +1}$ and associated to the projectors $\Pi_{a|s} = (\eye + a \sigma_s) / 2$.
We use $\vec s = (s_1, \ldots, s_q) \in \set{1, 2, 3}^q$ and $\vec a = (a_1, \ldots, a_q) \in \set{-1, +1}^q$ to represent the collective local settings and outcomes of the $q$ qubits, respectively.
We assume that each setting is measured with the same weight $q_{\vec s}=3^{-q}$, so that effectively the generalised measurement with the $6^q$ outcomes $E_{\vec a,\vec s}=3^{-q}\Pi_{\vec a|\vec s}$ is measured, yielding the probabilities
\begin{equation}\label{eq:pauliM}
    p_{\vec a,\vec s}
    = (M \rho)_{\vec a,\vec s}
    = 3^{-q}\tr(\rho \Pi_{\vec a|\vec s})
    \quad\text{with}\quad \Pi_{\vec a|\vec s}=\bigotimes_{k=1}^q \Pi_{a_k|s_k}.
\end{equation}

The measurement described by $E_{\vec a,\vec s}$ satisfies the conditions of a local symmetric measurement discussed in Section~\ref{ap:sic-v3} with $d=2$, $r=1$, and $n=6$.
It follows that the confidence region $C_A$ has the radius $\epsilon\sigma_A$ with
\begin{equation}
    \sigma_A=5^{q/2}.
\end{equation}
Similarly, the semiaxes of $C_B$ are $\epsilon A_{\vec \mu}$ with
\begin{equation}\label{eq:semiaxis}
    A_{\vec \mu}= 3^{ (q-\chi_{\vec \mu})/2 } \bigotimes_{k=1}^q \sigma_{\mu_k} \quad\text{for } \vec\mu\in\set{0,1,2,3}^q,
\end{equation}
where $\chi_{\vec \mu}$ counts the number of zeros in $\vec \mu$ and $\sigma_0=\eye$.

Because we mostly use the Pauli-bases measurement in our numerical analysis, we collect here a few formulas which are useful for that purpose. The map $M$ defined in Eq.~\eqref{eq:pauliM}, evaluated for $\sigma_{\vec \mu}=\bigotimes_k \sigma_{\mu_k}$ has the components
\begin{equation}\label{eq:mpaulicomp}
    (M\sigma_{\vec \mu})_{\vec a,\vec s}
    = 3^{-q}\prod_k(\delta_{\mu_k,0}+a_k\delta_{\mu_k,s_k})
\end{equation}
and squares to
\begin{equation}\label{eq:msqcoef}
    M^\dag M \sigma_{\vec \mu}=3^{\chi_{\vec \mu}-2q}\sigma_{\vec \mu}.
\end{equation}
The pseudoinverse $M^+$ is given by \cite{guta2020fast},
\begin{equation}
    M^+\vec x
    = \sum_{\vec a,\vec s} x_{\vec a,\vec s}
      \bigotimes_{k=1}^q \left(3\Pi_{a_k|s_k}-\eye\right).
\end{equation}
It has the components
\begin{equation}
    (M^{+\dag}\sigma_{\vec \mu})_{\vec a,\vec s}
    = \prod_k(\delta_{\mu_k,0}+3a_k\delta_{\mu_k,s_k})
\end{equation}
and squares to
\begin{equation}
    (M^{+\dag}M^+)_{(\vec a,\vec s),(\vec b,\vec t)}
    = 2^{-q}\prod_k(1+9a_k b_k \delta_{s_k,t_k}).
\end{equation}

\subsection{Global Pauli observables}\label{ap:pauli-obs2}
We consider the measurements of all Pauli observables $\sigma_{\vec s}=\bigotimes_k \sigma_{s_k}$, $\vec s\in \set{0,1,2,3}^Q$ on $Q$ qubits, that is, on a system with dimension $d=2^Q$.
The measurement outcomes correspond to the $\pm1$ eigenspaces of the observables, $\Pi_{\pm 1|\vec s}=\frac12(\eye\pm \sigma_{\vec s})$, and we exclude the trivial measurement with $\vec s=\vec 0$.
Hence, there are $d^2-1$ measurement settings and we assume that each setting is measured with the same weight $q_{\vec s}=(d^2-1)^{-1}$.

Hence we have a single measurement with operators $E_{\pm1,\vec s}=q_{\vec s}\Pi_{\pm1|\vec s}$.
It satisfies the conditions of a joint symmetric measurement with $q=1$, $d=2^Q$, $r=d/2$, and $n=2(d^2-1)$. Consequently, the confidence region $C_A$ has the radius $\epsilon \sigma_A$ with
\begin{equation}
    \sigma_A=\sqrt{\frac{(d^2-1)^2}d +\frac1d}
\end{equation}
and the semiaxes of $C_B$ are given by $\epsilon A_{\vec \mu}$ with
\begin{equation}
    A_{\vec \mu}=(d^2-1)^{(1-\delta_{\vec \mu,\vec 0})/2}\sigma_{\vec \mu} \quad\text{for}\quad \vec\mu\in\set{0,1,2,3}^Q.
\end{equation}
Finally, the estimator $\hat\rho$ takes a particularly simple form for the case of Pauli observables \cite{guta2020fast},
\begin{equation}
    \hat\rho(\vec f) = \frac\eye d + \frac{d^2-1}d\sum_{\vec s}(f_{+1,\vec s}-f_{-1,\vec s})\sigma_{\vec s},
\end{equation}
where the sum is over all $\vec s\in\set{0,1,2,3}^Q$, except $\vec s=\vec 0$.

\section{Performance analysis}
\label{ap:performance}

In Section~\ref{ap:analytical-comparison}, we analytically compare the confidence region $C_B$ with the Gaussian confidence region $C_2$ proposed in Ref.~\cite{almeida2023comparison}.
Subsequently, in Section~\ref{ap:improved-guta}, we provide details on the reference confidence region $C_R$. In Section~\ref{ap:gme}, we study the performance of $C_B$ in the certification of genuine multipartite entanglement.

\subsection{Comparison with a Gaussian confidence region}
\label{ap:analytical-comparison}
We compare the confidence region $C_B$ to the recently proposed confidence region $C_G$ that is based on a Gaussian approximation; see Confidence Region~$C_2$ in Ref.~\cite{almeida2023comparison}. Both regions have the same shape and only differ in their size. For a system in dimension $d$,
\begin{equation}
    \prob{\norm{\rho-\hat\rho}_M> \epsilon \sigma_G }
      \le S_{d^2-1}(N\epsilon^2)
\end{equation}
where $S_k$ is the survival function of the $\chi^2$ distribution with $k$ degrees of freedom  and $\sigma_G^2$ is the maximal eigenvalue of the covariance matrix,
\begin{equation}
    \sigma_B^2= \max_{\rho}\max_s\left(\frac1{q_s}\norm{\diag(\vec p_s) - \vec p_s \vec p_s^\intercal}_\infty\right)\le \max_{s,a}\left(\frac1{q_s}\norm{E_{a|s}}_\infty\right),
\end{equation}
where $p_{a|s}=\tr(E_{a|s}\rho)$, $q_s=n_s/N$, and $n_s$ is the number of samples for setting $s$.

For the comparison, we consider the case of local symmetric measurements, see Section~\ref{ap:sic-v3}. There we have $\max_{a} \norm{E_a}_\infty = \vartheta$ which we use for $\sigma_G$. Then the size of the Gaussian region is given by
\begin{equation}
    \epsilon_G=\epsilon\sigma_G= \sqrt{\vartheta \frac{S^{-1}_{d^{2q}-1}(\delta)}N},
\end{equation}
with $S^{-1}_k$ the inverse survival function and $1-\delta$ the confidence level.
We compare this to the size of the confidence region $C_B$, which is $\epsilon_B=\epsilon \sigma_B=\sigma_B\sqrt{2\log(8/\delta)/N}$ for large $N$. The ratio of both sizes is
\begin{equation}\label{eq:BGratio}
  \frac{\epsilon_B}{\epsilon_G}
    = \left(\frac rd\right)^{q/2}\sqrt{ \frac{2 d^{2q}\log(8/\delta)}{S^{-1}_{d^{2q}-1}(\delta)}}= O\left(\frac rd\right)^{q/2},
\end{equation}
due to $\lim_{k\to\infty} S_k^{-1}(\delta)/k=1$.
For local structured measurements and specifically the Pauli-bases measurement, we have $r=1$ and hence $C_B$ performs significantly better for large dimensions and for a large number of qudits. A different situation occurs for Pauli observables on $Q$ qubits, where $(r/d)^q=1/2$ is constant and hence $C_G$ performs better than $C_B$, albeit only by a constant factor.

For a more fair comparison, one may consider instead of a single measurement with $n$ outcomes a situation of $m$ projective measurement settings, each with $d/r$ outcomes and $q_s=1/m$. This is in particular sensible for the Pauli-bases measurement and the measurement of Pauli observables. Using the bound for projective measurements, $\sigma_G\le \max_s 1/(2q_s)=m/2$ and taking into account that $M$ is rescaled by $m$, this yields the same expression as in Eq.~\eqref{eq:BGratio} but with an extra factor of $\sqrt 2$.
We show $\epsilon_B/\epsilon_G$ for the Pauli-basis measurements in dependence of $q$ including this extra factor in Figure~\ref{fig:newold}. From that, one can infer that above $5$ qubits, $C_B$ becomes superior.

\begin{figure}
    \centering
    \includegraphics[width=.6\linewidth]{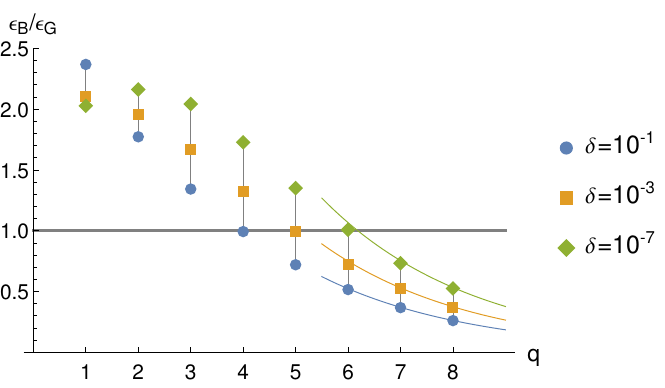}
    \caption{\label{fig:newold}%
    Comparison of the size of the confidence region $C_B$ and the confidence region $C_2$ in Ref.~\cite{almeida2023comparison} depending on the number of qubits $q$ and for different confidence levels $1-\delta$.
    The ratio $\epsilon_B/\epsilon_G$ is shown for the Pauli-bases measurement on $q$ qubits. The thin lines show the asymptotic $O(2^{-q/2})$ behaviour for large $q$.}
\end{figure}

\subsection{Amended bound for the reference confidence region}
\label{ap:improved-guta}
Theorem 6.1 in Ref.~\cite{tropp2012user} presents the following matrix Bernstein inequality, valid for all $t>0$:
\begin{equation}\label{eq:bernstein-mat}
    \prob{\sum_i X_i \not\prec \eye t } \leq d \exp\left[
    - \frac{t^2}{2v} \tilde w\left(\frac{tL}{v}\right)
    \right].
\end{equation}
Here, $X_i$ are independent zero-mean $d\times d$ Hermitian matrix-valued random variables, $v\ge \norm{\sum_i \expv{X_i^2}}_\infty$, $L\eye \succcurlyeq X_i$, and for $\tilde w(x)$ one can use either of the functions
\begin{equation}
    \tilde w_T(x)=\frac 2{x^2} \left[(x+1)\log(x+1)-x\right]
    \ge \tilde w_{S}(x)=\frac 3{3+x} \ge \tilde w_R(x)=\frac 3{4\max(1,x)},
\end{equation}
see also Figure~\ref{fig:guta-epsilon-comparison}.
Note the similarity between this Bernstein inequality and the one that we present as Theorem~\ref{thm:matrix-bernstein} in Section~\ref{ap:vector-bernstein}.

\begin{figure}
    \centering
    \includegraphics[width=.6\linewidth]{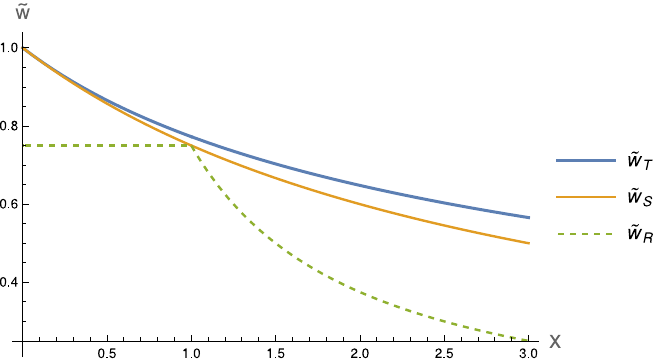}
    \caption{Functions that are commonly used for $\tilde w(x)$ in the matrix Bernstein inequality \eqref{eq:bernstein-mat}. $\tilde w_T$ gives the tightest bound and $\tilde w_S$ is a very close approximation. The most interesting regime is $x<1$, and $\tilde w_R$ is here simply a constant.}
    \label{fig:guta-epsilon-comparison}
\end{figure}

In Ref.~\cite{guta2020fast}, a confidence region for quantum state tomography is derived from Eq.~\eqref{eq:bernstein-mat} and using the function $\tilde w_R(x)$. For a fair comparison with confidence region $C_A$ and $C_B$, we provide here an amended version of this confidence region, using $\tilde w_S(x)$.
Then, from the analysis in Ref.~\cite{guta2020fast}, it readily follows that
\begin{equation}\label{eq:guta-general-nt}
    \prob{\norm*{ \rho - \hat{\rho} }_\infty \geq \epsilon\sigma_R } \leq 2 d \exp\left[
      -\frac{N \epsilon^2}{2}\, \tilde w_S(\epsilon\eta_R) \right],
\end{equation}
where $\sigma_R= \sqrt{v/N}$ and $\eta_R=\lambda/\sigma_R$ for $\lambda\ge \norm{X_i}_\infty$. For comparison with the symbols used in Ref.~\cite{guta2020fast}, we have to replace $\lambda\to RN$ and $v\to\sigma^2N^2$; see Table~\ref{tab:guta-constants} for known values of $\sigma_R$ and $\eta_R$.
Correspondingly, equating the right-hand side of equation to $\delta$, we have $\epsilon=\frac 6{\eta_R} \sqrt u (\sqrt u+\sqrt{u+1})$ with $u=\eta_R^2 \log(2d/\delta)/(18 N)$. The values used for the performance comparisons in the main text are based on Eq.~\eqref{eq:guta-general-nt}. Hence the results differ from a similar analysis in Ref.~\cite{almeida2023comparison}, where the original bound from Ref.~\cite{guta2020fast} is used which employs $\tilde w_R$ and lacks the overall factor of $2$.

\begin{table}
    \centering
    \begin{tabular}{lcccc}
    \toprule
         {Measurement scenario \hspace{4em}} & Parameters & ~ ~ $\sigma_R$ ~ ~ & ~ ~ $\eta_R$ ~ ~  \\
    \midrule
        structured measurement, dimension $d$ & $r=1$, $q=1$ & $\sqrt{2d}$ & $\sqrt{d/2}$  \\[.5ex]
        Pauli-bases, $q$ qubits & $d=2$, $r=1$ & $3^{q/2}$ & $2\,(4/3)^{q/2}$  \\[.5ex]
        Pauli observables, $Q$ qubits & $d=2^Q$, $r=d/2$, $q=1$ & $d$ & 2 \\
    \bottomrule
    \end{tabular}
    \caption{Constants $\sigma_R$ and $\eta_R$ for the confidence region $C_R$ defined via Eq.~\eqref{eq:guta-general-nt}, for different measurement scenarios. The constants are obtained from the values for $R$ and $\sigma^2$ in Appendix~C of Ref.~\cite{guta2020fast}.
    The column ``Parameters'' shows how the measurement scenario emerges from the local symmetric measurements discussed in Section~\ref{ap:sic-v3} by fixing some of the parameters $d$, $r$, and $q$.}
    \label{tab:guta-constants}
\end{table}

\subsection{Usage example: Entanglement certification}%
\label{ap:gme}

If quantum state tomography is experimentally affordable and a reasonably small confidence region can be achieved, then this region provides easy access to many properties of the state. Here we consider entanglement characterisation \cite{guehne2009entanglement}.
To verify that an experimental state is entangled using tomographic data, one must not only verify that the estimate $\hat\rho(\vec f)$ is entangled, but also that its associated confidence region contains only entangled states.
A more common approach to entanglement detection is to use entanglement witnesses \cite{terhal2000bell}, but that heavily relies on prior information on the state. When the state is randomly sampled, effectively detecting its entanglement requires a total number of $e^{\Omega(d)}$ witnesses to be tested, which is impractical \cite{znidaric2007detecting,liu2022fundamental}. In such cases, tomography is a viable approach.

\begin{figure}
    \centering
    \includegraphics[width=.5\columnwidth]{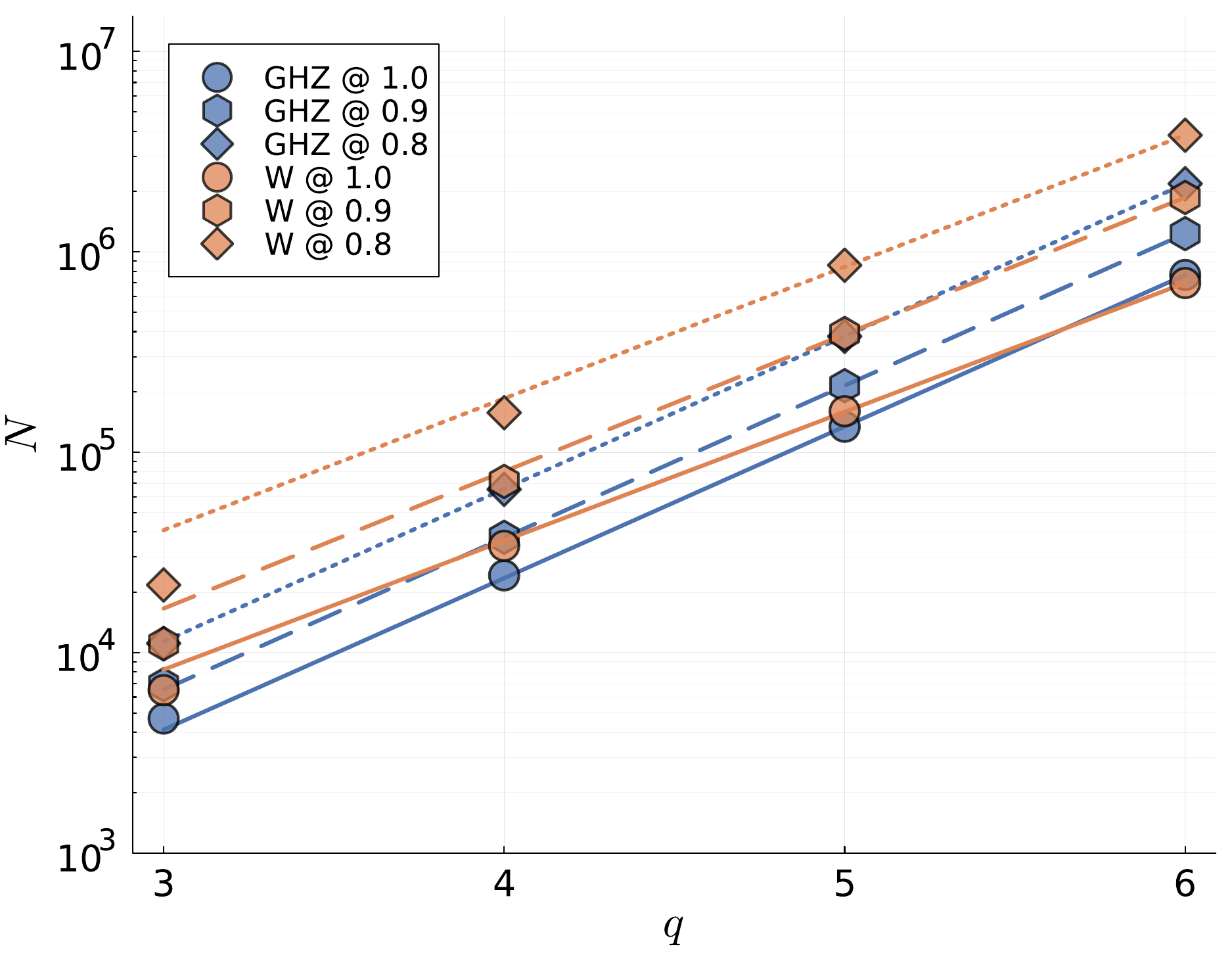}
    \caption{Number of samples $N$ required to certify genuine multipartite entanglement by means of the confidence region $C_B$ for confidence level $1-\delta=0.9$.
    We simulate tomography experiments using local symmetric informationally-complete  measurements and determine $N$ over $1024$ estimation procedures, such that in $(50\pm 2)\%$ of the cases the confidence region contains no state which is a PPT mixture. Results are shown for $\mathcal D_t(\ket{\mathrm{GHZ}})$ and $\mathcal D_t(\ket{\mathrm{W}})$ with $t=1.0$, $0.9$, $0.8$ and $\mathcal{D}_t(\ket\psi) = t\dyad{\psi} + (1-t)\eye/2^q$.
    The lines correspond to least-squares fits assuming an exponential model. For reference, the largest values for $t$ such that a GHZ (W) state is a PPT mixture are $0.429$, $0.467$, $0.484$, $0.493$ ($0.479$, $0.474$, $0.422$, $0.353$) for $q = 3$, $4$, $5$, $6$ qubits, respectively.}
    \label{fig:gme}
\end{figure}

In multipartite systems, the strongest notion of entanglement is that of genuine multipartite entanglement \cite{seevinck2008partial,guehne2009entanglement}, which is a property of any state that cannot be written as a convex combination of biseparable states.
More precisely, let $\rho_{ABC}$ be a three-partite state with the subsystems labelled by $A$, $B$ and $C$. Then $\rho$ is said to be genuine multipartite entangled if and only if \cite{guehne2009entanglement, horodecki_entanglement,seevinck2008partial}
\begin{equation}\label{eq:bisep}
    \rho_{ABC} \neq p_1 \rho_{A \vert BC} + p_2 \rho_{AB \vert C} + p_3 \rho_{AC \vert B},
\end{equation}
where $(p_1,p_2,p_3)$ forms a probability distribution and $\rho_{X \vert Y}$ stands for a state which is separable across the bipartition $X|Y$, that is, $\rho = \sum_i p_i \rho_{X}^i \otimes \rho_{Y}^i$. This definition extends naturally for $q$-partite systems, in which case all $\binom{q}{2}$ bipartitions must be considered.
The use of biseparability in this definition can be relaxed to that of a bipartite state with positive partial-transpose (PPT), leading to the PPT mixtures \cite{jungnitsch2011taming}. Although the set of PPT mixtures is a superset of the biseparable states, it is frequently a good approximation to the former \cite{jungnitsch2011taming}. Most importantly, it has a finite semidefinite representation given by
\begin{equation}\label{eq:gme-sdp}
    \Set{ \rho = \sum_i \omega_i | \omega_i \succcurlyeq 0 \;\forall i,\quad \omega_i^{\intercal_i} \succcurlyeq 0\;\forall i,\quad \sum_i \tr(\omega_i) = 1 }
\end{equation}
Here, $i$ runs through all possible bipartitions of $q$ qubits and by $\rho_i^{\intercal_i}$ we mean a partial transposition with respect to the cut imposed by the $i$th bipartition. The constraints enforce that $\rho = \sum_i \omega_i$ is a PPT mixture, as one can interpret each $\omega_i$ as a subnormalised PPT state \cite{jungnitsch2011taming,tavakoli2023semidefinite}.

Our goal is to certify that a confidence region contains only genuinely entangled states and we use region $C_B$ as an example. Since it can be represented by the conic constraint $\norm*{\rho - \hat{\rho}}_M \leq \epsilon \sigma_B$, it can be directly added to Eq.~\eqref{eq:gme-sdp}. Then, a feasible solution would certify there exists a state $\rho$ which is not genuine multipartite entangled inside the confidence region. In this case, we cannot guarantee with high confidence that the true state $\rho$ is genuine multipartite entangled, see also Section \ref{sec:usage} in the main text.

As a demonstration, we consider $q= 3, \ldots, 6$ qubits, each measured with local symmetric informationally complete measurements \cite{renes2004sic}.
This scheme is experimentally feasible \cite{stricker2022experimental} and has the same performance as the Pauli-bases measurement with respect to $C_A$ and $C_B$ (see Section~\ref{ap:sic-v3}). Notice that the constants necessary to evaluate the reference confidence region $C_R$ are not available for this measurement scheme.
With these measurements, we simulate a tomography experiment using Greenberger--Horne--Zeilinger states $\ket{\mathrm{GHZ}}=\frac1{\sqrt2}(\ket 0^{\otimes q}+ \ket 1^{\otimes q})$ and W states $\ket{\mathrm W}=\frac1{\sqrt d}(\ket{00\dotsm001}+\ket{00\dotsm010}+\dotsm+\ket{10\dotsm000})$ as the fiducial states. The number of samples $N$ is then increased until the associated confidence region stops intersecting the set of PPT mixtures. The results are shown in Figure~\ref{fig:gme}. As expected, we observe an exponential increase of the sample cost $N$ with the number of qubits, and generally more samples are required for noisier states. Overall, the number of samples is still moderate, with millions of state preparations required for 6 qubits.

\end{appendix}

\end{document}